\newcommand{\expn}{E}
\newcommand{\pr}{P}
\newcommand{\eps}{\epsilon}
\newtheorem{theorem}{Theorem}
\newtheorem{definition}{Definition}
\newtheorem{lemma}{Lemma}
\newtheorem{corollary}{Corollary}
\begin{document}

\title{Dynamic Tardos Traitor Tracing Schemes}
\author{Thijs Laarhoven\footnote{T. Laarhoven, B. \v{S}kori\'{c}, and B. de Weger are with the Department of Mathematics and Computer Science, Eindhoven University of Technology, P.O. Box 513, 5600 MB Eindhoven, The Netherlands. \protect\\
E-mail: \{t.m.m.laarhoven,b.skoric,b.m.m.d.weger\}@tue.nl.} \and Jeroen Doumen\footnote{J. Doumen and P. Roelse are with Irdeto, P.O. Box 3047, 2130 KA Hoofddorp, The Netherlands.\protect\\
E-mail: \{jdoumen,peter.roelse\}@irdeto.com.} \and Peter Roelse\footnotemark[2] \and Boris \v{S}kori\'{c}\footnotemark[1] \and Benne de Weger\footnotemark[1]}
\date{\today}

\maketitle

\begin{abstract}
We construct binary dynamic traitor tracing schemes, where the number of watermark bits needed to trace and disconnect any coalition of pirates is quadratic in the number of pirates, and logarithmic in the total number of users and the error probability. Our results improve upon results of Tassa, and our schemes have several other advantages, such as being able to generate all codewords in advance, a simple accusation method, and flexibility when the feedback from the pirate network is delayed.
\end{abstract}

\section{Introduction}
\label{sec:Introduction}

To protect digital content from unauthorized redistribution, distributors embed watermarks in the content such that, if a customer distributes his copy of the content, the distributor can see this copy, extract the watermark and see which user it belongs to. By embedding a unique watermark for each different user, the distributor can always determine from the detected watermark which of the customers is guilty. However, several users could cooperate to form a coalition, and compare their differently watermarked copies to look for the watermark. Assuming that the original data is the same for all users, the differences they detect are differences in their watermarks. The colluders can then distort this watermark, and distribute a copy which matches all their copies on the positions where they detected no differences, and has some possibly non-deterministic output on the detected watermark positions. Since the watermark does not match any user's watermark exactly, finding the guilty users is non-trivial. 

In this paper we focus on the problem of constructing efficient collusion-resistant schemes for tracing pirates, which involves finding a way to choose watermark symbols for each user (the traitor tracing code) and a way to trace a detected copy back to the guilty users (an accusation algorithm). In particular, we will focus on the application of such schemes in the dynamic setting, where the pirate output is detected in real-time, before the next watermark symbols are embedded in consecutive segments of the content. We will show that by building upon the (static) Tardos scheme \cite{tardos03}, we can construct efficient and flexible dynamic traitor tracing schemes. The number of watermark symbols needed in our schemes is a significant improvement compared to the scheme of Tassa \cite{tassa05}, and our schemes can be easily adjusted when the model is slightly different from the standard dynamic traitor tracing model \cite{berkman01,fiat01,roelse11,tassa05}.

\subsection{Model}
\label{sub:Introduction-Model}

Let us first formally describe the mathematical model for the problem discussed in this paper. First, some entity called the distributor controls the database of watermarks and distributes the content. The recipients, each receiving a watermarked copy of the content, are referred to as users. We write $U = \{1, \ldots, n\}$ for the set of all users, and we commonly use the symbol $j$ for indexing these users. For the watermarks, we refer to the sequence of watermarking symbols assigned to a user $j$ by the vector $\vec{X}_j$, which is also called a codeword. We write $\ell$ for the total number of watermark symbols in a codeword, so that each codeword $\vec{X}_j$ has length $\ell$, and we commonly use the symbol $i$ to index the watermark positions. We write $\mathcal{X}$ for the algorithm used to generate the codewords $\vec{X}_j$. In this paper we only focus on watermark symbols from a binary alphabet, so that $(\vec{X}_j)_i \in \{0,1\}$ for all $i,j$. A common way to represent the traitor tracing code is by putting all codewords $\vec{X}_j$ as rows in a matrix $X$, so that $X_{j,i} = (\vec{X}_j)_i$ is the symbol on position $i$ of user $j$.

After assigning a codeword to each user, the codewords are embedded in the data as watermarks. The watermarked copies are sent to the users, and some of the users (called the pirates or colluders) collude to create a pirate copy. The pirates form a subset $C \subseteq U$, and we use $c = |C|$ for the number of pirates in the coalition. The pirate copy has some distorted watermark, denoted by $\vec{y}$. We assume that if on some position $i$ all pirates see the same symbol, they output this symbol. This assumption is known in the literature as the marking assumption. On other positions we assume pirates simply choose one of the two symbols to output. This choice of pirate symbols can be formalized by denoting a pirate strategy by a (probabilistic) function $\rho$, which maps a code matrix $X$ (or the part of the matrix visible to them) to a forgery $\vec{y}$. After the coalition generates a pirate copy, we assume the distributor detects it and uses some accusation algorithm $\sigma$ to map the forgery $\vec{y}$ to some subset $\sigma(\vec{y}) = \hat{C} \subseteq U$ of accused users. These users are then disconnected from the system. Ideally $\hat{C} = C$, but this may not always be achievable.

\paragraph{Static schemes.} We distinguish between two types of schemes. In static schemes, the process ends after one run of the above algorithm with a fixed codelength $\ell$, and the set $\hat{C}$ is the final set of accused users. So the complete codewords are generated and distributed, the pirates generate and distribute a pirate copy, and the distributor detects this output and calculates the set of accused users. In this case an elementary result is that one can never have any certainty of catching all pirates. After all, the coalition could decide to sacrifice one of its members, so that $\vec{y} = \vec{X}_j$ for some $j \in C$. Then it is impossible to distinguish between other pirates $j' \in C \setminus \{j\}$ and innocent users $j' \in U \setminus C$. However, static schemes do exist that achieve catching at least one guilty user and not accusing any innocent users with high probability. The original Tardos scheme \cite{tardos03} belongs to this class of schemes.

\paragraph{Dynamic schemes.} The other type of scheme is the class of dynamic schemes, where the process of sending out symbols, detecting pirate output and running an accusation algorithm is repeated multiple times. In this case, if a user is caught, he is immediately cut off from the system and can no longer access the content. These dynamic scenarios for example apply to live broadcasts, such as pay-tv. The distributor broadcasts the content, while the pirates directly output a pirate copy of the content. The distributor then listens in on this pirate broadcast, extracts the watermarks, and uses this information for the choice of watermarks for the next segment of the content. We assume that the pirates always try to keep their broadcast running, so that if one of the pirates is disconnected, the other pirates will take over. Ideally one demands that the set of accused users always matches the exact coalition, i.e.\ $\hat{C} = C$, and with dynamic schemes we can also achieve this with high probability, as we will see later. The new schemes we present in this paper belong to this class of schemes.

As mentioned earlier, we call static schemes successful if with high probability, at least one guilty user is caught, and no innocent users are accused. With dynamic schemes one can catch all pirates, so we only call such schemes successful if with high probability, all pirates are caught and no innocent users are accused. This leads to the following definitions of soundness and static/dynamic completeness.

\begin{definition}[Soundness and completeness] \label{def:Secureness}
Let $(\mathcal{X}, \sigma)$ be a traitor tracing scheme, let $c_0 \geq 2$ and let $\eps_1,\eps_2 \in (0,1)$. Then this scheme is called $\eps_1$-sound, if for all coalitions $C \subseteq U$ and pirate strategies $\rho$, the probability of accusing one or more innocent users is bounded from above by
  \begin{align*}
  \pr(\hat{C} \not\subseteq C) \leq \eps_1.
  \end{align*}
A static traitor tracing scheme $(\mathcal{X}, \sigma)$ is called static $(\eps_2, c_0)$-complete, if for all coalitions $C \subseteq U$ of size at most $c_0$ and for all pirate strategies $\rho$, the probability of not catching \textit{any} pirates is bounded from above by
  \begin{align*}
  \pr(C \cap \hat{C} = \emptyset) \leq \eps_2.
  \end{align*}
Finally, a dynamic traitor tracing scheme $(\mathcal{X}, \sigma)$ is called dynamic $(\eps_2, c_0)$-complete, if for all coalitions $C \subseteq U$ of size at most $c_0$ and for all pirate strategies $\rho$, the probability of not catching \textit{all} pirates is bounded from above by
  \begin{align*}
  \pr(C \not\subseteq \hat{C}) \leq \eps_2.
  \end{align*}
\end{definition}

Note that we distinguish between $c$, the \textit{actual} collusion size, and $c_0$, the \textit{estimated} collusion size used by the distributor to build the traitor tracing scheme. Since $c$ is usually unknown, the distributor has to make a guess $c_0 \approx c$, which has to be sufficiently large to guarantee security, and sufficiently small to guarantee efficiency. 

In the following sections we will omit the $c_0$ in the completeness property if the parameter is implicit. Similarly, when $\eps_1$ or $\eps_2$ is implicit, we simply call a scheme sound or complete. As we will see later, in the schemes discussed in this paper, $\eps_1/n$ and $\eps_2$ are closely related. We will use the notation $\eta = \ln(\eps_2)/\ln(\eps_1/n)$ to denote the log ratio of these error probabilities. In most practical scenarios we have $\eps_1/n < \eps_2$, so usually $\eta \in (0,1)$.

\subsection{Related work}
\label{sub:Introduction-RelatedWork}

The schemes in this paper all build upon the Tardos scheme \cite{tardos03}, introduced in 2003. This is an efficient static traitor tracing scheme, and it was the first scheme to achieve $\eps_1$-soundness and $(\eps_2, c_0)$-completeness with a codelength of $\ell = O(c_0^2 \ln(n/\eps_1))$. In the same paper it was proved that this order codelength is asymptotically optimal for large $c$. The original Tardos scheme had a codelength of $\ell = 100 c_0^2 \ln(n/\eps_1)$, and several improvements of the Tardos scheme have been suggested to reduce the constant before the $c_0^2 \ln(n/\eps_1)$. We mention two in particular: the improved analysis done by Blayer and Tassa \cite{blayer08}; and the introduction of a symmetric score function by \v{S}kori\'{c} et al. \cite{skoric08}. Laarhoven and De Weger combined these improvements \cite{laarhoven11} to get even shorter codelength constants. For $c_0 \geq 2$ and $\eta \leq 1$, this construction gives codelengths of $\ell < 24 c_0^2 \ln(n/\eps_1)$, with the constant further decreasing as $c_0$ increases or $\eta$ decreases. For asymptotically large $c_0$, this construction leads to codelengths satisfying $\ell = [\frac{\pi^2}{2} + O(c_0^{-1/3})] c_0^2 \ln(n/\eps_1)$. The symmetric Tardos scheme and its properties are discussed in Section~\ref{sec:Preliminaries}.

For the dynamic setting, we mention four papers. In 2001, Fiat and Tassa~\cite{fiat01} constructed a deterministic scheme, i.e., a scheme with $\eps_1 = \eps_2 = 0$. The number of symbols needed to catch pirates in that scheme is only $\ell = O(c \log n)$, but the alphabet size required is $q = 2c + 1$. In the same year, Berkman et al.~\cite{berkman01} proposed several deterministic schemes using a smaller alphabet of size $q = c + 1$, with codelengths ranging from $O(c^3\log_2(n))$ to $O(c^2 + c\log_2(n))$. In 2005, Tassa~\cite{tassa05} combined the dynamic scheme of Fiat and Tassa~\cite{fiat01} with the static scheme of Boneh and Shaw~\cite{boneh98}, to get a dynamic scheme using a binary alphabet, with a codelength of $\ell = O(c^4 \log_2(n) \ln(c/\eps_1))$. In the same paper it was suggested that using the Tardos scheme instead of the scheme of Boneh and Shaw as a building block may decrease the codelength by a factor $c$, thus possibly giving a codelength of $\ell = O(c^3 \log_2(n) \ln(c/\eps_1))$. In 2011, Roelse~\cite{roelse11} presented another deterministic scheme. As in the generalization of the scheme of Fiat and Tassa presented by Berkman et al.~\cite{berkman01}, in the scheme of Roelse the alphabet size equals $kc + 1$ with $k \geq 2$ and for a fixed value of $k$, the codelength is $O(c \log n)$. Moreover, the real-time computational cost and the bandwidth usage are logarithmic in $n$, instead of linear in $n$ as in the scheme of Fiat and Tassa and its generalization of Berkman et al.~\cite{berkman01}.

\subsection{Contributions and outline}
\label{sub:Introduction-Contributions}

First we show that the static Tardos scheme can be extended to a dynamic traitor tracing scheme in an efficient way, allowing us to catch the whole coalition instead of at least one colluder. This dynamic scheme has a codelength of $\ell = O(c c_0 \ln(n/\eps_1))$, where the constants only slightly increase compared to the constants of Laarhoven and De Weger~\cite{laarhoven11}. The adjustments do not influence the method of generating codewords, so these can still be generated in advance. 

To avoid the loss of efficiency caused by having to choose a value $c_0$, we then show how to create a $c_0$-independent ``universal" dynamic scheme that does not require a sharp estimate of $c$ as input. The property that the codewords can be generated in advance is left unchanged, while the scheme also has several advantages with respect to flexibility, detailed in Section~\ref{sec:Discussion}. The codelength of this scheme is also $\ell = O(c^2 \ln(n/\eps_1))$, thus improving upon the results of Tassa~\cite{tassa05} by roughly a factor $O(c^2)$ and upon the suggested improvement of Tassa by a factor $O(c)$.

The paper is organized as follows. First, we recall the construction of the static symmetric Tardos scheme and its properties in Section~\ref{sec:Preliminaries}. This scheme and its results will be used as the foundation for the dynamic Tardos scheme, which we present in Section~\ref{sec:DynamicTardos}. Then, in Section~\ref{sec:SemiDynamicTardos} we present a modification of the dynamic Tardos scheme when the setting is not fully dynamic. In Section~\ref{sec:UniversalTardos} we then present the universal Tardos scheme, which is an extension of the dynamic Tardos scheme that does not require a sharp bound on $c$ as input. In Section~\ref{sec:Discussion}, we discuss the results and argue that our schemes have several advantages with respect to flexibility as well. Finally, in Section~\ref{sec:OpenProblems} we list some open problems raised by our work.

This paper is mainly based on results from the first author's Master's thesis \cite{msc11}.

\section{Preliminaries: The Tardos scheme}
\label{sec:Preliminaries}

The results in the next sections all build upon results from the (static) symmetric Tardos scheme, so we first discuss this scheme here. Since the codeword generation of the schemes discussed in this paper all use (a variant of) the arcsine distribution, we also explicitly mention this distribution below. 

\subsection{Arcsine distribution}
\label{sub:Preliminaries-Arcsine}

The standard arcsine distribution function $F(p)$ on $[0,1]$, and its associated probability density function $f(p)$, are given by:
\begin{align}
  F(p) = \frac{2}{\pi} \arcsin(\sqrt{p}), \quad f(p) = \frac{1}{\pi \sqrt{p (1 - p)}}. \label{dist2}
\end{align}
This distribution function will be used in Section~\ref{sec:UniversalTardos}. In Sections~\ref{sec:Preliminaries}, \ref{sec:DynamicTardos} and \ref{sec:SemiDynamicTardos} we will use a variant of this distribution function, where the values of $p$ cannot be arbitrarily close to $0$ and $1$, as this generally leads to a high probability of accusing innocent users. Tardos~\cite{tardos03} therefore used the arcsine distribution with a certain small cutoff parameter $\delta > 0$, such that $p$ is always between $\delta$ and $1 - \delta$. By scaling $F$ and $f$ appropriately on this interval, this leads to the following distribution functions $F_{\delta}$ and associated probability density functions $f_{\delta}$:
\begin{align}
  F_{\delta}(p) &= \frac{2 \arcsin(\sqrt{p}) - 2 \arcsin(\sqrt{\delta})}{\pi - 4 \arcsin(\sqrt{\delta})}, \label{dist1} \\
	f_{\delta}(p) &= \frac{1}{(\pi - 4 \arcsin(\sqrt{\delta}))\sqrt{p (1 - p)}}. 
\end{align}
Note that taking $\delta = 0$ (i.e., using no cutoff) leads to $F_0(p) \equiv F(p)$.

\subsection{Construction}
\label{sub:Preliminaries-Construction}

The Tardos scheme, with parameters $d_{\ell}, d_z, d_{\delta}$ as used by Blayer and Tassa \cite{blayer08} and Laarhoven and De Weger \cite{laarhoven11}, and with the symmetric score function introduced by \v{S}kori\'{c} et al.~\cite{skoric08}, is described below. 


\begin{enumerate}
  \item \textbf{Initialization phase}
  \begin{enumerate}
    \item Take the codelength as $\ell = d_{\ell} c_0^2 \ln(n/\eps_1)$.
		\item Take the threshold as $Z = d_z c_0 \ln(n/\eps_1)$.
		\item Take the cutoff parameter as $\delta = 1/(d_{\delta} c_0^{4/3})$. \footnote{Previously \cite{blayer08, skoric08, tardos03, laarhoven11}, it was common to parametrize the offset $\delta$ as $\delta = 1/(d_{\delta} c_0)$. However, Laarhoven and De Weger \cite{laarhoven11} showed that to get an optimal codelength, $\delta$ should scale as $c_0^{-4/3}$ rather than $c_0^{-1}$. Therefore we now use $\delta = 1/(d_{\delta} c_0^{4/3})$, with $d_{\delta}$ converging to a non-zero constant for asymptotically large $c_0$.} 
  \end{enumerate}
  \item \textbf{Codeword generation} \\
	For each position $1 \leq i \leq \ell$:
  \begin{enumerate}
    \item Select $p_i \in [\delta, 1 - \delta]$ from the distribution function $F_{\delta}(p)$ defined in \eqref{dist1}.      
    \item For each user $j \in U$, generate the $i$th entry of the codeword of user $j$ according to $\pr(X_{j,i} = 1) = p_i$ and $\pr(X_{j,i} = 0) = 1 - p_i$.
  \end{enumerate}
	\item \textbf{Distribution of codewords}\\
	  Send to each user $j \in U$ their codeword $\vec{X}_j = (X_{j,1}, \ldots, X_{j,\ell})$, embedded as a watermark in the content.
	\item \textbf{Detection of pirate output} \\
	  Detect the pirate output, and extract the watermark $\vec{y} = (y_1, \ldots, y_{\ell})$.
  \item \textbf{Accusation phase} \\
	For each user $j \in U$:
  \begin{enumerate}
    \item For each position $1 \leq i \leq \ell$, calculate the user's score $S_{j,i}$ for this position according to:
      \begin{align}
      S_{j,i} = \begin{cases}
        +\sqrt{(1 - p_i)/p_i} & \text{if $X_{j,i} = 1, y_i = 1$}, \\
        -\sqrt{(1 - p_i)/p_i} & \text{if $X_{j,i} = 1, y_i = 0$}, \\
		-\sqrt{p_i/(1 - p_i)} & \text{if $X_{j,i} = 0, y_i = 1$}, \\
        +\sqrt{p_i/(1 - p_i)} & \text{if $X_{j,i} = 0, y_i = 0$}.
        \end{cases} \label{scoresym}
      \end{align}
    \item Calculate the user's total score $S_j(\ell) = \sum_{i = 1}^{\ell} S_{j,i}$. 
	\item User $j$ is accused (i.e.\ $j \in \hat{C}$) iff $S_j(\ell) > Z$.
  \end{enumerate}
\end{enumerate}

\subsection{Soundness}
\label{sub:Preliminaries-Soundness}

For the above construction, one can prove soundness and static completeness, provided the constants $d_{\ell}, d_z, d_{\delta}$ satisfy certain requirements. For soundness, Laarhoven and De Weger~\cite{laarhoven11} proved the following lemma. Here $h(x) = (e^x - 1 - x)/x^2$, which is a strictly increasing function from $(0,\infty)$ to $(\frac{1}{2}, \infty)$.

\begin{lemma} \cite[Lemma 1]{laarhoven11} \label{lem:Soundness}
Let the Tardos scheme be constructed as in Section~\ref{sub:Preliminaries-Construction}. Let $j$ be some arbitrary innocent user, and let $a > 0$. Then
\begin{align*}
\expn\left(e^{a S_j(\ell) c_0^{-1}}\right) \leq \left(\frac{\eps_1}{n}\right)^{-a \lambda_a d_{\ell}},
\end{align*}
where $\lambda_a = a h(a \sqrt{d_{\delta}} c_0^{-1/3})$.
\end{lemma}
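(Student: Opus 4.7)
The plan is to exploit independence across positions and bound the per-position moment generating function. Since the $p_i$ are drawn independently and the codebook entries $X_{j',i}$ are then independent Bernoullis, the per-position scores $S_{j,i}$ are mutually independent across $i$. Hence
\begin{align*}
\expn\bigl(e^{a S_j(\ell)/c_0}\bigr) = \prod_{i=1}^{\ell} \expn\bigl(e^{a S_{j,i}/c_0}\bigr),
\end{align*}
so it suffices to bound a single factor by $\exp(a\lambda_a d_\ell \ln(n/\eps_1)/\ell) = \exp(a\lambda_a/c_0^2)$.

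The first key step is that, for an innocent user $j$, the symbol $X_{j,i}$ is independent of the pirate output $y_i$ conditional on $p_i$, because $y_i$ is a function only of $\{X_{j',i}: j' \in C\}$. Condition on $p_i$ and on $y_i$; write $b = a/c_0$. When $y_i = 1$,
\begin{align*}
\expn\bigl(e^{b S_{j,i}} \mid p_i, y_i{=}1\bigr)
= p_i\,e^{b\sqrt{(1-p_i)/p_i}} + (1-p_i)\,e^{-b\sqrt{p_i/(1-p_i)}}.
\end{align*}
The case $y_i = 0$ is symmetric. Using the identity $e^x = 1 + x + x^2 h(x)$ from the definition of $h$, the linear term vanishes by the standard Tardos computation $p_i\sqrt{(1-p_i)/p_i} - (1-p_i)\sqrt{p_i/(1-p_i)} = 0$, leaving
\begin{align*}
\expn\bigl(e^{b S_{j,i}}\mid p_i, y_i{=}1\bigr)
= 1 + b^2 \bigl[(1-p_i) h\bigl(b\sqrt{\tfrac{1-p_i}{p_i}}\bigr) + p_i\, h\bigl(-b\sqrt{\tfrac{p_i}{1-p_i}}\bigr)\bigr].
\end{align*}

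The second key step is the uniform bound on the arguments of $h$. Because $p_i \in [\delta,1-\delta]$, both $b\sqrt{(1-p_i)/p_i}$ and $b\sqrt{p_i/(1-p_i)}$ have absolute value at most $b\sqrt{(1-\delta)/\delta} \leq b/\sqrt{\delta} = (a/c_0)\sqrt{d_\delta}\,c_0^{2/3} = a\sqrt{d_\delta}\,c_0^{-1/3}$, which is precisely the argument appearing in $\lambda_a$. Combined with the elementary observation that $h$ is even-dominated ($h(-x) \leq h(x)$ for $x>0$, since $e^x-e^{-x}>2x$) and monotone increasing, both $h(\cdot)$ factors are bounded by $h(a\sqrt{d_\delta}c_0^{-1/3})$. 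Since $(1-p_i)+p_i = 1$, we get
\begin{align*}
\expn\bigl(e^{b S_{j,i}}\mid p_i, y_i\bigr)
\leq 1 + b^2 h(a\sqrt{d_\delta}\,c_0^{-1/3})
\leq \exp\!\bigl(b^2 h(a\sqrt{d_\delta}\,c_0^{-1/3})\bigr),
\end{align*}
uniformly in $p_i$ and $y_i$, so the same bound holds after taking expectations over $p_i$ and $y_i$.

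Finally, substituting $b^2 = a^2/c_0^2$ and recognising $a \cdot h(a\sqrt{d_\delta}c_0^{-1/3}) = \lambda_a$, the per-position bound is $\exp(a\lambda_a/c_0^2)$. Taking the product over the $\ell = d_\ell c_0^2 \ln(n/\eps_1)$ positions yields $\exp(a\lambda_a d_\ell \ln(n/\eps_1)) = (\eps_1/n)^{-a\lambda_a d_\ell}$, as required. The only real subtlety is the choice of cutoff $\delta = 1/(d_\delta c_0^{4/3})$, which is precisely tuned so that the worst-case argument of $h$ scales as $c_0^{-1/3}$; any slower decay would blow up $\lambda_a$, and any faster decay would restrict the distribution $F_\delta$ too much. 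Everything else is a routine application of the Tardos moment-generating-function technique.
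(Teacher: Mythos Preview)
The paper does not prove this lemma; it is quoted verbatim from \cite[Lemma~1]{laarhoven11} and used as a black box. Your argument is the standard proof of that result (the Blayer--Tassa / \v{S}kori\'{c} et al.\ moment-generating-function computation, as refined in \cite{laarhoven11}): factor over positions, condition on $(p_i,y_i)$, use $e^x = 1 + x + x^2 h(x)$ so that the linear term cancels by $p_i\sqrt{(1-p_i)/p_i} = (1-p_i)\sqrt{p_i/(1-p_i)}$, then bound both $h$-arguments by $a\sqrt{d_\delta}\,c_0^{-1/3}$ via the cutoff $\delta = 1/(d_\delta c_0^{4/3})$. This is correct and matches the cited source.

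One small remark: your opening claim that ``the per-position scores $S_{j,i}$ are mutually independent across $i$'' is not literally true in the static model, since $y_i$ may depend on the pirates' entire codewords and hence on all $p_{i'}$. What actually makes the factorisation work is exactly what you do two paragraphs later: condition on the full vectors $\vec{p}$ and $\vec{y}$, observe that the innocent user's symbols $X_{j,i}$ are then independent $\mathrm{Bernoulli}(p_i)$ (because $\vec{y}$ does not depend on $X_{j,\cdot}$), bound each conditional factor uniformly by $\exp(a\lambda_a/c_0^2)$, and only then take the outer expectation. Since you carry this out correctly, the proof stands; just tighten the phrasing of the independence step.
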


Now if the following condition of soundness is satisfied,
\begin{align*}
\exists \ a > 0: \quad a \left(d_z - \lambda_a d_{\ell}\right) \geq 1, \tag{S} \label{eq:S}
\end{align*}
then using the Markov inequality and Lemma~\ref{lem:Soundness} with this $a$, for innocent users $j$ we get
\begin{align*}
\pr(j \in \hat{C}) & \leq \pr(S_j(\ell) > Z) = \pr(e^{a S_j(\ell) c_0^{-1}} > e^{a Z c_0^{-1}}) \\ 
& \leq \frac{\expn\left(e^{a S_j(\ell) c_0^{-1}}\right)}{e^{a Z c_0^{-1}}} \leq \left(\frac{\eps_1}{n}\right)^{a (d_z - \lambda_a d_{\ell})} \leq \frac{\eps_1}{n}.
\end{align*}
So the probability that no innocent user is accused is at least $(1 - \frac{\eps_1}{n})^n \geq 1 - \eps_1$, as was also shown by Laarhoven and De Weger~\cite[Theorem 3]{laarhoven11}.

\subsection{Static completeness}
\label{sub:Preliminaries-Completeness}

To prove static completeness, Laarhoven and De Weger~\cite{laarhoven11} used the following lemma. Below, and throughout the rest of this paper, $S(\ell) = \sum_{j \in C} S_j(\ell)$ represents the total coalition score, i.e., the sum of the scores of all pirates $j \in C$.

\begin{lemma} \cite[Lemma 2]{laarhoven11} \label{lem:Completeness}
Let the Tardos scheme be constructed as in Section~\ref{sub:Preliminaries-Construction}, and let $b > 0$. Then
\begin{align*}
\expn\left(e^{-b S(\ell) c_0^{-5/3}}\right) \leq \left(\frac{\eps_1}{n}\right)^{b \lambda_b d_{\ell} c_0^{1/3}},
\end{align*}
where $\lambda_b = \frac{2}{\pi} - \frac{4}{d_{\delta} \pi} c_0^{-1/3} - b h(b \sqrt{d_{\delta}}) c_0^{-2/3}$.
\end{lemma}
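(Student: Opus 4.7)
The plan is to factor the moment generating function across the $\ell$ independent positions, expand each factor to second order via the identity $e^{-x} = 1 - x + h(-x)\,x^2$, and reduce to first- and second-moment estimates for the per-position coalition score.

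Because $p_i$ and the column $(X_{j,i})_{j \in U}$ are drawn independently across $i$, and the pirate output $y_i$ depends only on the position-$i$ data (together with the pirate's own auxiliary randomness), the per-position coalition scores $S_i = \sum_{j \in C} S_{j,i}$ are mutually independent. Hence
$$\expn\bigl[e^{-b c_0^{-5/3} S(\ell)}\bigr] = \prod_{i=1}^{\ell} \expn\bigl[e^{-b c_0^{-5/3} S_i}\bigr],$$
and using $\ell = d_\ell c_0^2 \ln(n/\eps_1)$ it suffices to establish the per-position bound $\expn[e^{-b c_0^{-5/3} S_i}] \leq \exp(-b \lambda_b c_0^{-5/3})$, since the product then equals $(\eps_1/n)^{b\lambda_b d_\ell c_0^{1/3}}$ exactly.

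Setting $x = b c_0^{-5/3} S_i$, the cutoff $p_i \in [\delta, 1-\delta]$ with $\delta = 1/(d_\delta c_0^{4/3})$ forces $|S_{j,i}| \leq \sqrt{(1-\delta)/\delta} \leq \sqrt{d_\delta}\,c_0^{2/3}$, so $|S_i| \leq c\sqrt{d_\delta}\,c_0^{2/3}$; combined with $c \leq c_0$ this gives $|x| \leq b\sqrt{d_\delta}$. Since $h$ is strictly increasing, $h(-x) \leq h(b\sqrt{d_\delta})$, and taking expectations yields
$$\expn\bigl[e^{-b c_0^{-5/3} S_i}\bigr] \leq 1 - b c_0^{-5/3}\,\expn[S_i] + b^2 c_0^{-10/3}\, h(b\sqrt{d_\delta})\,\expn[S_i^2].$$
Via $1+u \leq e^u$ and matching of exponents with $\exp(-b\lambda_b c_0^{-5/3})$, this reduces the lemma to the two moment estimates $\expn[S_i] \geq \tfrac{2}{\pi} - \tfrac{4}{d_\delta \pi}\, c_0^{-1/3}$ and $\expn[S_i^2] \leq c_0$.

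For the first moment, I would condition on $p_i = p$ and on the number $k$ of pirates with $X_{j,i} = 1$. The symmetric score function gives $\expn[S_i \mid p, k] = (2q(k)-1)(k-cp)/\sqrt{p(1-p)}$, where $q(k) = \pr(y_i=1\mid p, k)$ encodes the pirate strategy. Pointwise minimization over $q(k) \in [0,1]$ subject to the marking constraints $q(0)=0$, $q(c)=1$ yields
$$\min_\rho \expn[S_i \mid p] = \frac{2cp(1-p)^c + 2c(1-p)p^c - \expn_{k \sim \mathrm{Bin}(c,p)}\bigl[|k-cp|\bigr]}{\sqrt{p(1-p)}}.$$
The main obstacle is then to integrate this against $f_\delta$ on $[\delta, 1-\delta]$: the weight $f_\delta(p) \propto 1/\sqrt{p(1-p)}$ concentrates near the endpoints where the marking-boundary contributions dominate, while in the middle $\expn[|k-cp|] \approx \sqrt{cp(1-p)}$ can drive the integrand negative. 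Only after a careful asymptotic expansion in $\delta = 1/(d_\delta c_0^{4/3})$ — exploiting $c_0 \delta = c_0^{-1/3}/d_\delta$, and the cancellation between the two marking-boundary terms and the absolute-deviation term — does the leading $\tfrac{2}{\pi}$ emerge with the stated $c_0^{-1/3}$ correction. The second moment is considerably easier, since $\expn[S_{j,i}^2 \mid p] = 1$ for each pirate and the pairwise correlations (induced by the shared $y_i$) are handled by conditioning on $k$, giving $\expn[S_i^2] \leq c_0$. Multiplying the resulting per-position estimate across all $\ell$ positions then yields the claimed bound.
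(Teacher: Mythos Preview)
The paper does not give its own proof of this lemma; it is quoted verbatim from \cite{laarhoven11}. Your overall framework --- factor the moment generating function over positions, expand each factor as $e^{-x} = 1 - x + h(-x)x^2$, bound $h(-x)$ using the cutoff, and reduce to the two moment estimates $\expn[S_i] \geq \tfrac{2}{\pi} - \tfrac{4}{d_\delta\pi}c_0^{-1/3}$ and $\expn[S_i^2] \leq c_0$ --- is precisely the route taken in \cite{laarhoven11} (and earlier in \cite{blayer08,skoric08}). The second-moment bound is indeed easy; in fact the cross terms $S_{j,i}S_{j',i}$ have mean exactly zero for $j\neq j'$, since flipping $y_i$ flips both factors, so $\expn[S_i^2]=c\leq c_0$.

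There is, however, a genuine gap in your first-moment argument. By minimizing $\expn[S_i\mid p]$ \emph{pointwise in $p$}, you are effectively allowing the pirate strategy to depend on $p$, which is information the pirates do not have. This overshoots: for $c=2$ and $\delta=0$ your formula gives $\min_\rho\expn[S_i\mid p]=4\sqrt{p(1-p)}\min(p,1-p)$, and integrating against $f(p)$ yields $1/\pi$, not $2/\pi$. For larger $c$ the discrepancy is worse, since $\expn_k[|k-cp|]\sim\sqrt{cp(1-p)}$ makes your integrand of order $-\sqrt{c}$ in the bulk. No asymptotic expansion in $\delta$ will recover the constant $2/\pi$ from this route.

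The argument in \cite{laarhoven11,skoric08} instead fixes a single marking-assumption strategy $\phi:k\mapsto y$ (not depending on $p$) and uses the derivative identity $\tfrac{d}{dp}\binom{c}{k}p^k(1-p)^{c-k}=\binom{c}{k}p^k(1-p)^{c-k}(k-cp)/[p(1-p)]$ together with the arcsine weight $f_\delta(p)\propto 1/\sqrt{p(1-p)}$ to integrate by parts. This collapses $\expn[S_i]$ to boundary terms at $p=\delta$ and $p=1-\delta$, namely $\expn_{k\sim\mathrm{Bin}(c,1-\delta)}[2\phi(k)-1]-\expn_{k\sim\mathrm{Bin}(c,\delta)}[2\phi(k)-1]$, up to the normalization constant. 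The marking constraints $\phi(0)=0$, $\phi(c)=1$ then force this difference to be at least $4(1-\delta)^c-2\geq 2-4c\delta$, and with $c\delta\leq c_0^{-1/3}/d_\delta$ the bound $\tfrac{2}{\pi}-\tfrac{4}{d_\delta\pi}c_0^{-1/3}$ drops out directly. This is the missing idea: the arcsine density is chosen precisely so that the $p$-integral telescopes, making the first moment strategy-independent to leading order.
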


If the following condition of completeness is satisfied, 
\begin{align*}
\exists \ b > 0: \quad b (\lambda_b d_{\ell} - d_z) \geq \eta c_0^{-1/3}, \tag{C} \label{eq:C}
\end{align*}
then using the pigeonhole principle, the Markov inequality and Lemma~\ref{lem:Completeness} with this $b$ we get 
\begin{align*}
\pr(C \cap \hat{C} = \emptyset) & \leq \pr(S(\ell) < c_0 Z) \leq \frac{\expn\left(e^{-b S(\ell) c_0^{-5/3}}\right)}{e^{-b Z c_0^{-2/3}}} \\
 & \leq \left(\frac{\eps_1}{n}\right)^{b (\lambda_b d_{\ell} - d_z) c_0^{1/3}} \leq \left(\frac{\eps_1}{n}\right)^{\eta} = \eps_2.
\end{align*}
So static completeness follows from Lemma~\ref{lem:Completeness} and condition~\eqref{eq:C}, as was also shown by Laarhoven and De Weger~\cite[Theorem 4]{laarhoven11}.

\subsection{Codelengths}
\label{sub:Preliminaries-Optimization}

Blayer and Tassa~\cite{blayer08}, and subsequently Laarhoven and De Weger~\cite{laarhoven11}, gave a detailed analysis to go from requirements \eqref{eq:S} and \eqref{eq:C} to the optimal set of parameters that satisfies the constraints and minimizes $d_{\ell}$. Recall that $\ell = d_{\ell} c_0^2 \ln(n/\eps_1)$, so a smaller $d_{\ell}$ gives shorter codelengths, whereas the parameters $d_z$ and $d_{\delta}$ affect only $Z$ and $\delta$, which have no influence on the efficiency of the scheme. In the end, the following result was obtained.

\begin{lemma} \cite[Theorem 6]{laarhoven11} \label{lem:FirstOrder} 
Let $\gamma = \left(\frac{2}{3\pi} \right)^{2/3} \approx 0.36$. The asymptotically optimal value for $d_{\ell}$ is
\begin{align*}
  d_{\ell} &= \frac{\pi^2}{2} + O(c_0^{-1/3}),
\end{align*}
the associated values for $d_z$ and $d_{\delta}$ are
\begin{align*}
  d_z = \pi + O(c_0^{-1/3}), & \quad d_{\delta} = \frac{4}{\gamma} - O\left(\frac{\eta}{\ln c_0}\right),
\end{align*}
and the corresponding values for $a,b,\lambda_a, \lambda_b$ are
\begin{align*}
  a = \frac{2}{\pi} - O(c_0^{-1/3}), & \quad b = \frac{\ln c_0}{9 \pi \gamma} - O\left(\ln\left(\frac{\ln c_0}{\eta}\right)\right), \\
	\lambda_a = \frac{1}{\pi} + O(c_0^{-1/3}), & \quad \lambda_b = \frac{2}{\pi} - O(c_0^{-1/3}).
\end{align*}
\end{lemma}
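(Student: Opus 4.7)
The plan is to treat this as a constrained optimization: minimize $d_\ell$ subject to \eqref{eq:S} and \eqref{eq:C}, with $d_z$, $a$, $b$, $d_\delta$ as the free variables. At the optimum both constraints must be tight, for otherwise $d_\ell$ could be shrunk while still satisfying them. Setting \eqref{eq:S} and \eqref{eq:C} to equalities and eliminating $d_z$ yields
\begin{align*}
d_\ell = \frac{1/a + (\eta/b)\, c_0^{-1/3}}{\lambda_b - \lambda_a},
\end{align*}
which I would then minimize over the remaining three parameters.

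For the leading-order asymptotics I would use that $\lambda_a = a\, h(a\sqrt{d_\delta}\, c_0^{-1/3}) \to a\, h(0) = a/2$ and that both correction terms in $\lambda_b$ vanish, so that $\lambda_b \to 2/\pi$, as $c_0 \to \infty$ (provided $b$ and $d_\delta$ do not grow too fast). The $\eta c_0^{-1/3}/b$ term also vanishes, so $d_\ell \to 1/[a(2/\pi - a/2)]$. Elementary calculus gives the minimum at $a = 2/\pi$ with value $\pi^2/2$. From this one reads off $\lambda_a = 1/\pi$, and tight \eqref{eq:S} then forces $d_z = 1/a + \lambda_a d_\ell = \pi/2 + \pi/2 = \pi$. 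This settles the leading-order half of the lemma.

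The $O(c_0^{-1/3})$ corrections are where $d_\delta$ and $b$ are pinned down. Using $h(x) = \tfrac{1}{2} + \tfrac{x}{6} + O(x^2)$ to expand $\lambda_a$, inserting the two correction terms in $\lambda_b$, and linearizing $d_\ell$ around the leading solution, I would obtain a subleading expression of the shape
\begin{align*}
d_\ell - \tfrac{\pi^2}{2} \approx c_0^{-1/3}\!\left[\frac{\pi\eta}{b} + \frac{2\pi^2}{d_\delta} + \frac{\pi\sqrt{d_\delta}}{3}\right] + \frac{\pi^3}{2}\, b\, h(b\sqrt{d_\delta})\, c_0^{-2/3}.
\end{align*}
Optimizing the bracketed part over $d_\delta$ alone gives the algebraic condition $d_\delta^{3/2} = 12\pi$, i.e.\ $d_\delta = (12\pi)^{2/3} = 4/\gamma$, which (after a short check) agrees with the stated $\gamma = (2/(3\pi))^{2/3}$. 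Since $h(x) \sim e^x/x^2$ for large $x$, for the last term not to dominate one needs $b\sqrt{d_\delta} \lesssim \tfrac{1}{3}\ln c_0$; balancing this term against the $\pi\eta/b$ contribution gives $b\sqrt{d_\delta} = \tfrac{1}{3}\ln c_0 + O(\ln(1/\eta))$, and hence $b = \ln c_0/(3\sqrt{d_\delta}) + \ldots = \ln c_0/(9\pi\gamma) + O(\ln(\ln c_0/\eta))$. The $-O(\eta/\ln c_0)$ shift in $d_\delta$ appears as a second-order correction to this joint optimization.

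The main obstacle will be the three-way balance between $\eta/b$, $1/d_\delta$ and $b\, h(b\sqrt{d_\delta})\, c_0^{-1/3}$, all of which contribute at the same effective order once $b$ is taken logarithmic in $c_0$. Most of the work is clean bookkeeping: tracking which terms are genuinely $o(c_0^{-1/3})$, handling the expansion around $a = 2/\pi$ consistently, and recovering the universal constant $\gamma = (2/(3\pi))^{2/3}$ from the first-order conditions rather than guessing it. A Lagrange-multiplier or straightforward partial-derivative analysis of the expression for $d_\ell$ above, using the asymptotic form of $h$, should suffice.
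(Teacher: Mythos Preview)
The paper does not actually prove this lemma: it is quoted verbatim as \cite[Theorem~6]{laarhoven11} and no argument is given here beyond the surrounding discussion of conditions \eqref{eq:S} and \eqref{eq:C}. So there is no in-paper proof to compare against; the result is imported from the companion reference.

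That said, your sketch is exactly the kind of analysis carried out in \cite{laarhoven11}: tighten both constraints, eliminate $d_z$ to obtain $d_\ell = (1/a + \eta c_0^{-1/3}/b)/(\lambda_b - \lambda_a)$, extract the leading term by sending $c_0\to\infty$ and optimizing over $a$, and then pin down $d_\delta$ and $b$ from the first subleading balance. Your leading-order computation is correct, and the identification $d_\delta^{3/2}=12\pi$ (equivalently $d_\delta=4/\gamma$) from minimizing $2\pi^2/d_\delta + \pi\sqrt{d_\delta}/3$ is the right first-order condition. The logarithmic growth of $b$ likewise comes out of balancing $\eta/b$ against the $b\,h(b\sqrt{d_\delta})\,c_0^{-1/3}$ term, as you indicate. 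The one place to be a bit more careful than your write-up suggests is that $a$ itself picks up an $O(c_0^{-1/3})$ shift away from $2/\pi$ at the subleading optimum, and this shift feeds back into the bracketed expression; you should check that re-optimizing over $a$ at first order does not alter the $d_\delta$ and $b$ conclusions (it does not, because at $a=2/\pi$ the leading part is stationary in $a$, but this deserves a line). With that caveat, your plan is sound and matches the approach of the cited source.
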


A direct consequence of Lemma~\ref{lem:FirstOrder} is the following, which gives the asymptotically optimal scheme parameters for $c_0 \to \infty$.

\begin{corollary} \cite[Corollary 1]{laarhoven11} \label{cor:Asymptotics}
The construction from Section \ref{sub:Preliminaries-Construction} gives an $\eps_1$-sound and static $(\eps_2,c_0)$-complete scheme with asymptotic scheme parameters
\begin{align*}
\ell \to \frac{\pi^2}{2} c_0^2 \ln(n/\eps_1), \quad Z \to \pi c_0 \ln(n/\eps_1), \quad \delta \to \frac{\gamma}{4} c_0^{-4/3}.
\end{align*}
\end{corollary}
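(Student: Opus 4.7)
The plan is to simply substitute the asymptotic values for $d_\ell$, $d_z$, and $d_\delta$ given by Lemma~\ref{lem:FirstOrder} into the defining formulas for $\ell$, $Z$, and $\delta$ from the initialization phase of the construction in Section~\ref{sub:Preliminaries-Construction}, and then take $c_0 \to \infty$. Soundness and static $(\eps_2,c_0)$-completeness are already guaranteed by Lemma~\ref{lem:FirstOrder}, since the constants produced there are constructed precisely to satisfy conditions~\eqref{eq:S} and~\eqref{eq:C}; thus the only remaining task is to convert the statements about $d_\ell, d_z, d_\delta$ into the corresponding statements about $\ell, Z, \delta$.

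First I would handle $\ell$: since $\ell = d_\ell c_0^2 \ln(n/\eps_1)$ and $d_\ell = \tfrac{\pi^2}{2} + O(c_0^{-1/3})$, factoring gives $\ell = \bigl[\tfrac{\pi^2}{2} + O(c_0^{-1/3})\bigr] c_0^2 \ln(n/\eps_1)$, which tends to $\tfrac{\pi^2}{2} c_0^2 \ln(n/\eps_1)$ in the stated sense. The argument for $Z = d_z c_0 \ln(n/\eps_1)$ is identical, using $d_z = \pi + O(c_0^{-1/3})$.

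The only step with a small twist is $\delta$, since it depends on $d_\delta$ reciprocally. From $\delta = 1/(d_\delta c_0^{4/3})$ and $d_\delta = 4/\gamma - O(\eta/\ln c_0)$, I would write
\begin{align*}
\delta = \frac{1}{c_0^{4/3}} \cdot \frac{1}{\tfrac{4}{\gamma} - O(\eta/\ln c_0)} = \frac{\gamma}{4 c_0^{4/3}} \cdot \frac{1}{1 - O(\eta/\ln c_0)},
\end{align*}
and expand the second factor as $1 + O(\eta/\ln c_0)$ for large $c_0$, giving $\delta = \tfrac{\gamma}{4} c_0^{-4/3} \bigl(1 + o(1)\bigr)$ as $c_0 \to \infty$, which is the claimed asymptotic.

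Since each step is a direct algebraic substitution followed by taking a limit, there is no real obstacle here; the work was already done in Lemma~\ref{lem:FirstOrder}. The only thing worth being careful about is interpreting the $O$-notation in $d_\delta$ uniformly in $\eta$ — in particular making sure that $\eta$ is treated as a fixed parameter (or at least $o(\ln c_0)$) so that the error term $O(\eta/\ln c_0)$ in $d_\delta$ genuinely vanishes as $c_0 \to \infty$, which is the regime in which the corollary is stated.
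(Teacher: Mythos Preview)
Your proposal is correct and matches the paper's approach exactly: the paper presents this corollary as ``a direct consequence of Lemma~\ref{lem:FirstOrder}'' without any further argument, and your substitution-and-limit computation is precisely what that phrase encodes. If anything, you have written out more detail (the reciprocal expansion for $\delta$) than the paper bothers to record.
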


For further details on the optimal first order constants, see Laarhoven and De Weger~\cite{laarhoven11}.

\subsection{Example}
\label{sub:Preliminaries-Example}

For the next few sections, we will use a running example to compare the codelengths of the several schemes. Let the scheme parameters be given by $c_0 = 25$ pirates, $n = 10^6$ users, and error probabilities $\eps_1 = \eps_2 = 10^{-3}$. Then $\eta = \frac{1}{3}$, and the optimal values of $d_{\ell}, d_z, d_{\delta}$ can be calculated numerically as
\begin{align*}
  d_{\ell} = 8.46, \quad d_z = 4.53, \quad d_{\delta} = 14.36.
\end{align*}
This leads to the scheme parameters
\begin{align*}
  \ell = 109\,585, \quad Z = 2345, \quad \delta = 5.09 \cdot 10^{-4}.
\end{align*}
So using these scheme parameters, we know that after $109\,585$ symbols, with probability at least $0.999$ there are no false accusations (regardless of the actual coalition size $c$), and with probability at least $0.999$ at least one pirate is accused if the actual coalition size $c$ does not exceed the bound on the coalition size $c_0 = 25$. In Fig.~\ref{fig:Fig1} we show simulation results for these parameters, with $c = c_0 = 25$. The curves in the figure are the pirate scores $S_j(i)$ for each pirate $j \in C$, while the shaded area is bounded from above by the highest score of an innocent user, and bounded from below by the lowest score of an innocent user in this simulation. In Fig.~\ref{fig:Fig1a} we simulated pirates using the interleaving attack (i.e.\ for each position, they choose a random pirate and output his symbol), and in Fig.~\ref{fig:Fig1b} they used the scapegoat strategy (i.e.\ one pirate, the scapegoat, always outputs his symbol, until he is caught and another pirate is picked as the scapegoat). With the scapegoat strategy, only one pirate is caught, while using the interleaving attack leads to many accused pirates.

\begin{figure}[!t]
\centering
\subfloat[][Interleaving attack]{\includegraphics[width=\columnwidth]{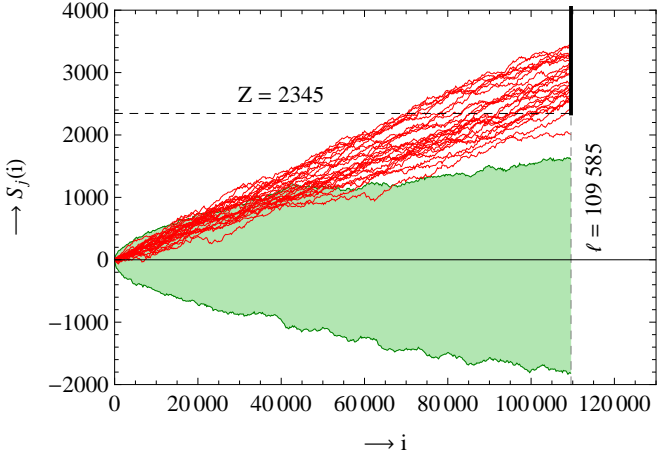} \label{fig:Fig1a}} \\
\subfloat[][Scapegoat strategy]{\includegraphics[width=\columnwidth]{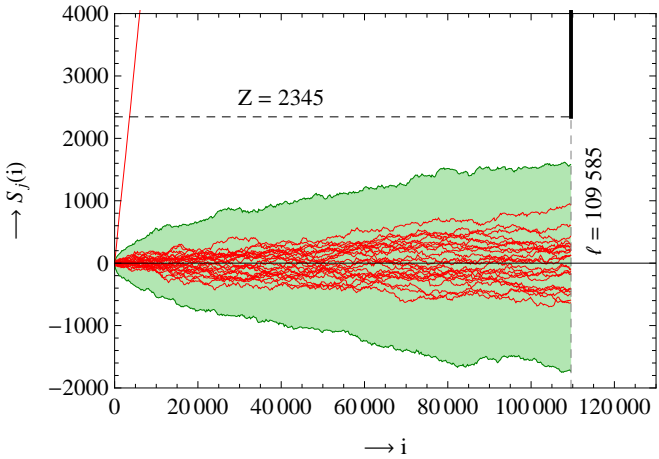} \label{fig:Fig1b}}
\caption{Simulations of the Tardos scheme, with $c = c_0 = 25$ colluders, $n = 10^6$ users, and error probabilities $\eps_1 = \eps_2 = 10^{-3}$. The green, shaded area corresponds to the range of innocent user scores, the red lines correspond to pirate scores, and the dashed lines correspond to the threshold $Z$ and codelength $\ell$. In Fig.~\ref{fig:Fig1a} the pirates used the interleaving attack, whereas in Fig.~\ref{fig:Fig1b} they used the scapegoat strategy. In both cases, the total coalition score $S(\ell)$ at time $\ell$ is approximately $72\,000$, but while in the first case the score is evenly divided among the pirates, in the second case one pirate takes all the blame.}
\label{fig:Fig1}
\end{figure}

\section{The dynamic Tardos scheme}
\label{sec:DynamicTardos}

Let us now explain how we create a dynamic scheme from the static Tardos scheme, such that with high probability we catch all colluders, instead of at least one colluder. The change we make is the following. Instead of only comparing the cumulative user scores to $Z$ after $\ell$ symbols, we now compare the scores to $Z$ after every single position $i$. If a user's score exceeds $Z$ at any point in time, he is disconnected immediately and can no longer access the content. His score is then necessarily between $Z$ and $\tilde{Z} := Z + \sqrt{d_{\delta}} c_0^{2/3} > Z + \max_{p_i,X_{j,i},y_i} S_{j,i}$. The other parts of the construction remain the same, except for the values of $d_{\ell}, d_z, d_{\delta}$, which now have to be chosen differently.

\subsection{Construction}
\label{sub:DynamicTardos-Construction}

The scheme again depends on three constants $d_{\ell}, d_z, d_{\delta}$. We will show in Sections~\ref{sub:DynamicTardos-Soundness} and \ref{sub:DynamicTardos-Completeness} that if certain requirements on these constants are satisfied, we can prove soundness and dynamic completeness. Below we say a user is active if he has not yet been disconnected from the scheme. As mentioned before, we assume that the pirates always output some watermarked data, unless all of the pirates are disconnected. In that case, the traitor tracing scheme terminates. 

\begin{enumerate}
  \item \textbf{Initialization phase}
  \begin{enumerate}
    \item Take the codelength as $\ell = d_{\ell} c_0^2 \ln(n/\eps_1)$.
		\item Take the threshold as $Z = d_z c_0 \ln(n/\eps_1)$.
		\item Take the cutoff parameter as $\delta = 1/(d_{\delta} c_0^{4/3})$.
		\item Set initial user scores at $S_j(0) = 0$.
  \end{enumerate}
  \item \textbf{Codeword generation} \\
	For each position $1 \leq i \leq \ell$:
  \begin{enumerate}
    \item Select $p_i \in [\delta, 1 - \delta]$ from $F_{\delta}(p)$ defined in \eqref{dist1}.
    \item Generate $X_{j,i} \in \{0,1\}$ using $\pr(X_{j,i} = 1) = p_i$.
  \end{enumerate}
  \item \textbf{Distribution/Detection/Accusation} \\
  For each position $1 \leq i \leq \ell$:
	\begin{enumerate}
	  \item Send to each active user $j$ symbol $X_{j,i}$.
		\item Detect the pirate output $y_i$. \\
		      (If there is no pirate output, terminate.)
		\item Calculate scores $S_{j,i}$ using \eqref{scoresym}.
    \item For active users $j$, set $S_j(i) = S_j(i-1) + S_{j,i}$. \\
		      (For inactive users $j$, set $S_j(i) = S_j(i-1)$.)
		\item Disconnect all active users $j$ with $S_j(i) > Z$.
	\end{enumerate}
\end{enumerate}

In the construction above, we separated the codeword generation from the distribution, detection and accusation. These phases can also be merged by generating $p_i$ and $X_{j,i}$ once we need them. However, we present the scheme as above to emphasize the fact that these phases can indeed be executed sequentially instead of simultaneously, and that the codeword generation can thus be done before the traitor tracing process begins. 

\subsection{Soundness}
\label{sub:DynamicTardos-Soundness}

For the dynamic Tardos scheme as given above, we can prove the following result regarding soundness.

\begin{theorem} \label{thm:PDS-DT-Soundness}
Consider the dynamic Tardos scheme in Section~\ref{sub:DynamicTardos-Construction}. If the following condition is satisfied,
\begin{align*}
\exists \ a > 0: \quad a (d_z - \lambda_a d_{\ell}) \geq 1 + \frac{\ln(2)}{\ln(n/\eps_1)}, \tag{S'} \label{eq:S'}
\end{align*}
then the scheme is $\eps_1$-sound.
\end{theorem}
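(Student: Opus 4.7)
The plan is to reduce the dynamic soundness statement to the static MGF bound of Lemma~\ref{lem:Soundness} by inserting a martingale maximal inequality. Fix an innocent user $j$; the target is $\pr(j\in\hat C)\le\eps_1/(2n)$, so that a union bound over the at most $n$ innocent users yields $\pr(\hat C\not\subseteq C)\le\eps_1/2\le\eps_1$.

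First I would verify that for innocent $j$ the per-position scores $S_{j,i}$ from \eqref{scoresym} form a martingale-difference sequence. Since $j\notin C$ the pirates never observe column $X_j$, so $X_{j,i}\sim\mathrm{Bernoulli}(p_i)$ is independent of $y_i$ given $p_i$ and the history, and a direct check from \eqref{scoresym} gives $\expn[S_{j,i}\mid p_i,y_i]=0$. This remains valid even when the coalition strategy is fully adaptive and even if the pirates are aware of past disconnections. To sidestep the fact that the real $S_j(i)$ is frozen the moment $j$ crosses $Z$, I introduce a ``shadow'' score $\tilde S_j(i)$ defined by the same recursion as $S_j(i)$ but never frozen; by the independence noted above, $\tilde S_j$ is a genuine martingale under the actual dynamic law.

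Next, by Jensen's inequality $M_i:=\exp(a\tilde S_j(i)/c_0)$ is a nonnegative submartingale. The key inclusion is $\{j\in\hat C\}=\{\exists i\le\ell\colon S_j(i)>Z\}\subseteq\{\max_{i\le\ell}\tilde S_j(i)>Z\}$, since at the disconnection time $T_j$ one has $\tilde S_j(T_j)=S_j(T_j)>Z$. Doob's maximal inequality then gives
\[
\pr(j\in\hat C)\;\le\;\frac{\expn[M_\ell]}{e^{aZ/c_0}},
\]
and because Lemma~\ref{lem:Soundness} depends only on the innocence of $j$ (not on whether the scheme is static or dynamic) it applies verbatim to bound $\expn[M_\ell]\le(\eps_1/n)^{-a\lambda_a d_\ell}$. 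Combining with $Z=d_z c_0\ln(n/\eps_1)$ turns the right-hand side into $(\eps_1/n)^{a(d_z-\lambda_a d_\ell)}$, and plugging in \eqref{eq:S'} gives at most $(\eps_1/n)^{1+\ln 2/\ln(n/\eps_1)}=\eps_1/(2n)$, closing the argument.

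The delicate point I expect to spend the most care on is justifying the martingale property of $\tilde S_j$ under an arbitrary adaptive pirate strategy: one must be explicit that the coalition's reactions to past disconnections cannot create any dependence between $X_{j,i}$ and $y_i$ for innocent $j$, because the pirates simply never see $X_j$. Everything else---Jensen, Doob, Lemma~\ref{lem:Soundness}, the union bound---is routine, and the extra $\ln(2)/\ln(n/\eps_1)$ term in \eqref{eq:S'} is there precisely to buy the factor of $2$ needed to absorb the loss from applying Markov to a maximum rather than to $S_j(\ell)$ itself.
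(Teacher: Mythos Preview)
Your argument is correct and takes a genuinely different route from the paper. The paper first proves Lemma~\ref{lem:DT1} via a reflection argument: conditionally on $\tilde S_j$ having crossed $Z$ at some first time $i_0\le\ell$, the tail $\{\tilde S_j(i)\}_{i\ge i_0}$ is a centred random walk started above $Z$, so $\pr(\tilde S_j(\ell)>Z\mid j\in\hat C)\ge\tfrac12$ and hence $\pr(j\in\hat C)\le 2\,\pr(\tilde S_j(\ell)>Z)$; the terminal probability is then bounded by Markov and Lemma~\ref{lem:Soundness}. You bypass this two-step reduction by applying Doob's maximal inequality directly to the nonnegative submartingale $M_i=\exp(a\tilde S_j(i)/c_0)$.

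Your route is in fact a little sharper than you give it credit for. Doob's inequality reads $\pr(\max_{i\le\ell}M_i\ge\lambda)\le\expn[M_\ell]/\lambda$ with constant~$1$, so your displayed bound already coincides with the plain Markov bound on $M_\ell$; there is no ``loss from applying Markov to a maximum,'' and hence no factor~$2$ to absorb. In other words, your argument actually shows that the original static condition~\eqref{eq:S} suffices for $\eps_1$-soundness of the dynamic scheme, and the extra $\ln(2)/\ln(n/\eps_1)$ in~\eqref{eq:S'} is an artefact of the paper's reflection device rather than something your proof requires. What the paper's approach buys is that it is slightly more elementary (no appeal to Doob), and the same reflection trick is recycled verbatim for completeness in Lemma~\ref{lem:DT-Comp}, giving a uniform treatment of both error bounds.
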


To prove the theorem, we first prove a relative upper bound on the probability that a single innocent user is accused and disconnected. This bound relates the error probability in the dynamic Tardos scheme to the probability that the user score at time $\ell$ is above $Z$. We then use the proof of the original Tardos scheme to get an absolute upper bound on the soundness error probability, and to prove Theorem~\ref{thm:PDS-DT-Soundness}. Since the relative upper bound gives us an extra factor $2$, and since the terms in \eqref{eq:S'} appear as exponents in the proof, we get an additional term $\ln(2)/\ln(n/\eps_1)$ compared to \eqref{eq:S}. Note that this term is small for reasonable values of $n$ and $\eps_1$, so this only has a small impact on the right hand side of \eqref{eq:S'}, compared to \eqref{eq:S}. 

In the following we write $\tilde{S}_j(i) = \sum_{k=1}^i S_{jk}$ for the \textit{extended} user score. If user $j$ is still active at time $i$, then $\tilde{S}_j(i) = S_j(i)$. But whereas $S_j(i)$ does not change anymore once user $j$ is disconnected, the score $\tilde{S}_j(i)$ does change on every position, even if the user has already been disconnected. The score $\tilde{S}_j$ then calculates the user's score as if he had not been disconnected. Similarly, we write $\tilde{S}(i) = \sum_{j \in C} \tilde{S}_j(i)$ for coalitions $C$. Note that if the last pirate is disconnected at position $i_0 < \ell$, then $S_{j,i}$ and $S_j(i)$ are not defined for $i_0 < i \leq \ell$.

\begin{lemma} \label{lem:DT1}
Let $j \in U$ be an arbitrary innocent user, let $C \subseteq U \setminus \{j\}$ be a pirate coalition and let $\rho$ be some pirate strategy employed by this coalition. Then
\begin{align*}
\pr(j \in \hat{C}) = P(S_j(\ell) > Z) \leq 2 \cdot \pr\left(\tilde{S}_j(\ell) > Z\right).
\end{align*}
\end{lemma}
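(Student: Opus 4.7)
The plan is to introduce the first-passage time
$\tau := \min\{1 \leq i \leq \ell : \tilde S_j(i) > Z\}$,
with the convention $\tau = \infty$ if the threshold is never crossed, and to use a reflection-style argument around $\tau$. Since $S_j$ agrees with $\tilde S_j$ up to the moment user $j$ is disconnected and $S_j$ stays frozen above $Z$ afterwards, the three events $\{j \in \hat C\}$, $\{S_j(\ell) > Z\}$ and $\{\tau \leq \ell\}$ all coincide, so the equality $\pr(j \in \hat C) = \pr(S_j(\ell) > Z)$ in the lemma is an immediate consequence of the construction in Section~\ref{sub:DynamicTardos-Construction}.

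To obtain the factor of two I would decompose
\[
  \pr(\tau \leq \ell) \;=\; \pr\bigl(\tilde S_j(\ell) > Z\bigr) \;+\; \pr\bigl(\tau \leq \ell,\ \tilde S_j(\ell) \leq Z\bigr),
\]
using that $\{\tilde S_j(\ell) > Z\} \subseteq \{\tau \leq \ell\}$. It then suffices to show that the second term does not exceed the first. Conditioning on the stopping-time $\sigma$-algebra $\mathcal F_\tau$, one has $\tilde S_j(\tau) > Z$ on $\{\tau \leq \ell\}$, and the tail increment $\Delta := \tilde S_j(\ell) - \tilde S_j(\tau) = \sum_{i=\tau+1}^{\ell} S_{j,i}$ is a sum of mean-zero contributions: inspecting the four cases in \eqref{scoresym} and averaging over $X_{j,i} \sim \mathrm{Bernoulli}(p_i)$ (which is independent of the pirates' view since $j$ is innocent) gives $\expn[S_{j,i} \mid p_i, y_i] = 0$ regardless of $(p_i, y_i)$. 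On $\{\tau \leq \ell\}$ the two events $\{\tilde S_j(\ell) > Z\}$ and $\{\tilde S_j(\ell) \leq Z\}$ therefore become $\{\Delta > Z - \tilde S_j(\tau)\}$ and $\{\Delta \leq Z - \tilde S_j(\tau)\}$, with the threshold $Z - \tilde S_j(\tau)$ strictly negative.

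The main obstacle is the remaining conditional statement: on $\{\tau \leq \ell\}$, I need $\pr(\Delta \leq Z - \tilde S_j(\tau) \mid \mathcal F_\tau) \leq \tfrac{1}{2}$, i.e.\ that a mean-zero walk is at most as likely to drop below a strictly negative level as it is to stay above it. This is delicate because the scores $S_{j,i}$ are mean-zero but \emph{not} marginally symmetric — their two possible values $\pm\sqrt{(1-p_i)/p_i}$ and $\mp\sqrt{p_i/(1-p_i)}$ have unequal magnitudes and unequal probabilities $p_i,\ 1-p_i$ — so L\'evy's classical maximal inequality for symmetric summands does not apply directly. The route I would take is to exploit the symmetry of the biasing distribution $F_\delta$ about $p = \tfrac{1}{2}$ combined with the conditional independence of $X_{j,i}$ from the pirates' view given $p_i$: the simultaneous substitution $(p_i, X_{j,i}) \mapsto (1-p_i,\ 1-X_{j,i})$ at positions $i > \tau$ preserves the joint law of these variables while flipping the sign of each $S_{j,i}$ (as one checks case-by-case in \eqref{scoresym}), and so, set up carefully so as not to disturb $\mathcal F_\tau$ or the pirate output $y_i$, it yields a measure-preserving involution pairing each realisation with $\Delta \leq Z - \tilde S_j(\tau)$ with one for which $\Delta \geq -(Z - \tilde S_j(\tau)) > 0$. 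This gives the required $\tfrac{1}{2}$ bound, and summing the two pieces in the decomposition yields $\pr(\tau \leq \ell) \leq 2\,\pr(\tilde S_j(\ell) > Z)$ as claimed.
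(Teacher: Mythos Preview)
Your high-level structure is the same as the paper's: introduce the first-passage time $\tau$ (the paper's $i_0$), identify $\{j\in\hat C\}=\{\tau\le\ell\}$, and extract the factor $2$ from a reflection argument on the post-$\tau$ walk, packaged via Bayes. The paper simply asserts that $\{\tilde S_j(i)\}_{i\ge i_0}$ is ``a symmetric random walk with no drift'' and reads off $\pr(\tilde S_j(\ell)\ge\tilde S_j(i_0)\mid A)=\tfrac12$ without further comment; you rightly flag this as the delicate point and try to supply a proof via the involution $(p_i,X_{j,i})\mapsto(1-p_i,1-X_{j,i})$ on positions $i>\tau$.

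The gap is that this involution is \emph{not} measure-preserving once you insist on leaving $y_i$ fixed. The pirate output $y_i$ is determined by the colluders' symbols $\{X_{k,i}:k\in C\}$, which are conditionally $\mathrm{Bernoulli}(p_i)$ given $p_i$. If you send $p_i\mapsto 1-p_i$ but freeze the pirates' symbols (and hence $y_i$), those symbols now have the wrong conditional law --- they are $\mathrm{Bernoulli}(p_i)$ draws sitting under parameter $1-p_i$ --- so the joint law of the full system is not preserved. Equivalently, the posterior of $p_i$ given the coalition's view is proportional to $f_\delta(p_i)\,p_i^{a_i}(1-p_i)^{c-a_i}$ with $a_i=\sum_{k\in C}X_{k,i}$, which is symmetric about $\tfrac12$ only in the knife-edge case $a_i=c-a_i$; in general you cannot reflect $p_i$ conditionally on $y_i$. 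The natural repair --- also flipping every pirate's symbol --- does restore measure-preservation, but then a generic strategy $\rho$ need not send $y_i$ to $1-y_i$, and even when it does one checks from \eqref{scoresym} that under $(p_i,X_{j,i},y_i)\mapsto(1-p_i,1-X_{j,i},1-y_i)$ the score $S_{j,i}$ is \emph{invariant}, not sign-flipped. So neither variant of the reflection yields the pairing you need, and the conditional inequality $\pr(\Delta\le Z-\tilde S_j(\tau)\mid\mathcal F_\tau)\le\tfrac12$ remains unproved for arbitrary $\rho$. (To be fair, the paper's one-line appeal to a ``symmetric random walk'' glosses over exactly this difficulty.)
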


\begin{proof}
Let us define events $A$ and $B$ as
\begin{align*}
A &:= \{j \in \hat{C}\} = \{S_j(\ell) > Z\} = \bigcup_{i=1}^{\ell} \{\tilde{S}_j(i) > Z\}, \\
B &:= \{\tilde{S}_j(\ell) > Z\}.
\end{align*}
We trivially have $\pr(A \mid B) = 1$. For $\pr(B \mid A)$, note that under the assumption that $A$ holds, the process $\{\tilde{S}_j(i)\}_{i = i_0}^{\infty}$ starting at position $i_0 = \min \{i: S_j(i) > Z\} \leq \ell$ describes a symmetric random walk with no drift. So we then have $\pr(\tilde{S}_j(\ell) \geq \tilde{S}_j(i_0)) = 1/2$, and since $S_j(i_0) > Z$ it follows that $\pr(B \mid A) \geq 1/2$. Finally we apply Bayes' Theorem to $A$ and $B$ to get 
\begin{align*}
P(A) = \frac{P(A \mid B)}{P(B \mid A)} \cdot P(B) \leq 2 \cdot P(B).
\end{align*}
This completes the proof. 
\end{proof}

\begin{proof}[Proof of Theorem \ref{thm:PDS-DT-Soundness}]
First, we remark that the distribution of $\tilde{S}_j(\ell)$ is the same as the distribution of the scores $S_j(\ell)$ in the original Tardos scheme, for the same parameters $\ell, Z, \delta$. From the Markov inequality, Lemma~\ref{lem:Soundness} and condition~\eqref{eq:S'} it thus follows that
\begin{align*}
\pr(\tilde{S}_j\left(\ell) > Z\right) \leq \frac{\expn\left(e^{a \tilde{S}_j(\ell) c_0^{-1}}\right)}{e^{a Z c_0^{-1}}} \leq \left(\frac{\eps_1}{n}\right)^{a (d_z - \lambda_a d_{\ell})} \leq \frac{\eps_1}{2n}.
\end{align*}
Using Lemma~\ref{lem:DT1} the result follows.
\end{proof}

\subsection{Dynamic completeness}
\label{sub:DynamicTardos-Completeness}

With the dynamic Tardos scheme, we get the following result regarding dynamic completeness. Recall that here we require that \textit{all} pirates are caught, instead of at least one, as was the case in the original Tardos scheme. 

\begin{theorem} \label{thm:PDS-DT-Completeness}
Consider the dynamic Tardos scheme in Section~\ref{sub:DynamicTardos-Construction}. If the following condition is satisfied,
\begin{align*}
\exists \ b > 0: b \left(\lambda_b d_{\ell} - d_z\right) \geq \left(\eta + \frac{\ln(2) + b \sqrt{d_{\delta}}}{\ln(n/\eps_1)}\right) c_0^{-1/3}, \tag{C'} \label{eq:C'}
\end{align*}
then the scheme is dynamic $(\eps_2,c_0)$-complete.
\end{theorem}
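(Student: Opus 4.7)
The plan is to follow the static completeness proof with two adjustments for the dynamic setting. On the bad event $\mathcal{E} = \{C \not\subseteq \hat{C}\}$ the scheme runs for all $\ell$ positions and at least one pirate $j^* \in C$ is never disconnected, so $S_{j^*}(\ell) \leq Z$. Every other pirate $j \in C$ either is still active at time $\ell$ (score $\leq Z$) or was disconnected earlier (score $\leq \tilde{Z} = Z + \sqrt{d_{\delta}}\,c_0^{2/3}$, using $|S_{j,i}| \leq \sqrt{(1-\delta)/\delta} \leq \sqrt{d_{\delta}}\,c_0^{2/3}$). Summing gives $S(\ell) \leq Z + (c-1)\tilde{Z} < c_0 \tilde{Z}$, so $\mathcal{E} \subseteq \{S(\ell) < c_0 \tilde{Z}\}$, and Markov's inequality yields
\begin{align*}
\pr(\mathcal{E}) \leq \pr(S(\ell) < c_0 \tilde{Z}) \leq \expn(e^{-b S(\ell) c_0^{-5/3}}) \cdot e^{b \tilde{Z} c_0^{-2/3}}
\end{align*}
for any $b > 0$.

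To bound $\expn(e^{-b S(\ell) c_0^{-5/3}})$, I would introduce the extended coalition score $\tilde{S}(\ell) = \sum_{j \in C}\sum_{k=1}^{\ell} S_{j,k}$. Since the dynamic pirate strategy, viewed as choosing $y_k$ at each $k$ from the then-active pirates' bits, is also a legitimate (adaptive) static strategy, Lemma~\ref{lem:Completeness} applies to the resulting $\tilde{S}(\ell)$ and gives $\expn(e^{-b \tilde{S}(\ell) c_0^{-5/3}}) \leq (\eps_1/n)^{b \lambda_b d_{\ell} c_0^{1/3}}$. The remaining step is a coalitional analogue of Lemma~\ref{lem:DT1}: I would show $\pr(S(\ell) < X) \leq 2\,\pr(\tilde{S}(\ell) < X)$ for every threshold $X$, using the fact that, conditional on all disconnection times and the history up to them, the continuation increments $S_{j,k}$ of each caught pirate (for $k > i_j$) have mean zero, so a reflection / Bayes argument like that of Lemma~\ref{lem:DT1} produces the factor of $2$.

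Combining these pieces and substituting $Z = d_z c_0 \ln(n/\eps_1)$ turns $e^{b \tilde{Z} c_0^{-2/3}}$ into $(\eps_1/n)^{-b d_z c_0^{1/3}} \cdot e^{b \sqrt{d_{\delta}}}$ and absorbs the remaining constants $2 \cdot e^{b\sqrt{d_{\delta}}}$ as a power of $(\eps_1/n)$, giving
\begin{align*}
\pr(\mathcal{E}) \leq (\eps_1/n)^{b(\lambda_b d_{\ell} - d_z) c_0^{1/3} \,-\, (\ln 2 + b \sqrt{d_{\delta}})/\ln(n/\eps_1)}.
\end{align*}
Condition~\eqref{eq:C'} makes this exponent at least $\eta$, so $\pr(\mathcal{E}) \leq (\eps_1/n)^{\eta} = \eps_2$, establishing dynamic completeness. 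The main obstacle is the coalitional version of Lemma~\ref{lem:DT1}: unlike the single-user case, where the post-stopping process is a symmetric martingale, here the post-disconnection increments are asymmetric two-point random variables and are correlated across caught pirates via the shared $y_k, p_k$, so making the factor of $2$ rigorous requires a careful conditioning (first on all $p_k, y_k$ and the disconnection times, then exploiting independence of the codeword bits $X_{j,k}$ across $j$).
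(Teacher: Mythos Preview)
Your approach is essentially the paper's: the same three ingredients appear (the inclusion $\{C\not\subseteq\hat{C}\}\subseteq\{S(\ell)<c_0\tilde{Z}\}$, a factor-of-$2$ passage from $S(\ell)$ to the extended score $\tilde{S}(\ell)$, and Markov combined with Lemma~\ref{lem:Completeness} applied to $\tilde{S}(\ell)$), and your final exponent matches the paper's exactly.

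Two remarks on the execution. First, the order of your middle steps is tangled. You apply Markov to $S(\ell)$ and then propose to bound $\expn\bigl(e^{-bS(\ell)c_0^{-5/3}}\bigr)$, but Lemma~\ref{lem:Completeness} does not apply to $S(\ell)$ (frozen scores), only to $\tilde{S}(\ell)$; and your proposed factor-of-$2$ tail inequality $\pr(S(\ell)<X)\leq 2\,\pr(\tilde{S}(\ell)<X)$ does not help you control that moment generating function. The paper avoids this by doing the factor-of-$2$ step \emph{first}: Lemma~\ref{lem:DT-Comp} proves directly that $\pr(C\not\subseteq\hat{C})\leq 2\,\pr(\tilde{S}(\ell)<c_0\tilde{Z})$, via the conditional probability $\pr(\tilde{S}(\ell)<c_0\tilde{Z}\mid C\not\subseteq\hat{C})\geq\tfrac12$, and only then applies Markov to $\tilde{S}(\ell)$. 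This is cleaner than your all-thresholds statement and is all that is needed.

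Second, you are right to flag the symmetry issue. The paper's justification is simply that $R(\ell)=\tilde{S}(\ell)-S(\ell)$ is ``a symmetric, unbiased random walk'' on the event $C\not\subseteq\hat{C}$, hence $\pr(R(\ell)<0)\geq\tfrac12$; it does not spell out the careful conditioning you describe. So the obstacle you identify is exactly the point where the paper, too, is brief, and your diagnosis of what would be required to make it fully rigorous (condition on $p_k,y_k$ and disconnection times; use independence of the unused bits $X_{j,k}$) is on target.
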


Similar to the proof of soundness, we prove dynamic completeness by relating the error probability to the static completeness error probability of the static Tardos scheme described in Section~\ref{sec:Preliminaries}. Then we use the results from the static scheme to complete the proof. We again see a factor $2$ in the relative upper bound in Lemma~\ref{lem:DT-Comp}, which again comes from a random walk argument, and which explains the additional term $\ln(2)/\ln(n/\eps_1)$ in~\eqref{eq:C'}. The other term $b \sqrt{d_{\delta}}/\ln(n/\eps_1)$ is a consequence of using $\tilde{Z}$ instead of $Z$ in the proofs. Note that these two terms are generally small, compared to the term $\eta$. 

\begin{lemma} \label{lem:DT-Comp}
Let $C$ be a coalition of size at most $c_0$, and let $\rho$ be any pirate strategy employed by this coalition. Then
\begin{align*}
\pr\left(C \not\subseteq \hat{C}\right) \leq 2 \cdot \pr\left(\tilde{S}(\ell) < c_0\tilde{Z}\right).
\end{align*}
\end{lemma}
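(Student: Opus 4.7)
The plan is to mirror the proof of Lemma~\ref{lem:DT1}. I would define the events
\begin{align*}
A = \{C \not\subseteq \hat{C}\}, \quad B = \{\tilde{S}(\ell) < c_0 \tilde{Z}\},
\end{align*}
and reduce the lemma to showing $\pr(B \mid A) \geq 1/2$, since then $\pr(A)\pr(B \mid A) = \pr(A \cap B) \leq \pr(B)$ immediately yields $\pr(A) \leq 2\pr(B)$.

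The first step is a deterministic upper bound on $S(\ell)$ under $A$. Under this event, at least one pirate $j^* \in C$ is still active at time $\ell$, so $S_{j^*}(\ell) \leq Z$. Every caught pirate's final score lies in $(Z, \tilde{Z}]$, because a single score update changes $S_j$ by at most $\sqrt{(1-\delta)/\delta} \leq \sqrt{d_{\delta}}\,c_0^{2/3} = \tilde{Z} - Z$; and every other still-active pirate contributes at most $Z \leq \tilde{Z}$. Summing over the $c \leq c_0$ pirates gives $S(\ell) \leq Z + (c-1)\tilde{Z} \leq c_0 \tilde{Z} - \sqrt{d_{\delta}}\,c_0^{2/3}$, and in particular $S(\ell) < c_0\tilde{Z}$ on $A$, with slack at least $\sqrt{d_{\delta}}\,c_0^{2/3}>0$.

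The second step is a symmetric random walk argument analogous to the one in Lemma~\ref{lem:DT1}. I would write $\tilde{S}(\ell) - S(\ell) = \sum_{j \text{ caught at } i_j^*}\bigl(\tilde{S}_j(\ell) - S_j(i_j^*)\bigr)$, and observe that once a pirate $j$ is disconnected he no longer influences $y_i$, so his post-disconnection score increments behave exactly like those of an innocent user and $\{\tilde{S}_j(i)\}_{i \geq i_j^*}$ is a symmetric mean-zero random walk. By the same symmetry reasoning used in Lemma~\ref{lem:DT1}, the aggregate phantom excess $\tilde{S}(\ell) - S(\ell)$ is nonpositive with probability at least $1/2$; combined with the deterministic bound from the first step this gives
\begin{align*}
\pr(B \mid A) \geq \pr\bigl(\tilde{S}(\ell) \leq S(\ell) \,\big|\, A\bigr) \geq \tfrac{1}{2}.
\end{align*}

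The main obstacle will be justifying this symmetry claim for the \emph{joint} phantom sum across several caught pirates, rather than a single user as in Lemma~\ref{lem:DT1}. The single-pirate case is exactly the argument used there. To extend it, I would condition on the entire $(p_i, y_i)$-trajectory together with the catch times $(i_j^*)$: given these, the unused codeword entries of the caught pirates are independent Bernoullis with the prescribed $p_i$, so their post-disconnection score contributions become conditionally independent mean-zero increments, and their joint sum inherits the symmetry that a single-pirate extended score already has.
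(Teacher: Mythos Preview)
Your proposal is correct and follows essentially the same approach as the paper: define $A$ and $B$, reduce to $\pr(B\mid A)\geq 1/2$, observe that $S(\ell)<c_0\tilde Z$ deterministically on $A$, and then argue that $R(\ell)=\tilde S(\ell)-S(\ell)$ is a symmetric mean-zero random walk so that $\pr(R(\ell)\leq 0)\geq 1/2$. Your write-up is in fact more careful than the paper's, which simply asserts the symmetry of $R(\ell)$ without the conditioning argument you sketch for the joint phantom sum; your explicit slack computation $S(\ell)\leq Z+(c-1)\tilde Z$ is also more detailed than the paper's one-line claim.
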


\begin{proof}
First we remark that $\pr(\tilde{S}(\ell) < c_0\tilde{Z} \mid C \not\subseteq \hat{C}) \geq 1/2$. In other words, if not all pirates are caught by the end, the total extended coalition score will be below $c_0 \tilde{Z}$ with probability at least $1/2$. This is because if $C \not\subseteq \hat{C}$, then $S(\ell) < c_0\tilde{Z}$, and since $\tilde{S}(\ell) - S(\ell) = R(\ell)$ is a symmetric, unbiased random walk, with probability at least $1/2$ we have $R(\ell) < 0$ and as a consequence $\tilde{S}(\ell) < c_0\tilde{Z}$. Next, we use the definition of conditional probabilities to get
\begin{align*}
\pr(C \not\subseteq \hat{C}) \leq 2 \cdot \pr(C \not\subseteq \hat{C}) \cdot \pr(\tilde{S}(\ell) < c_0 \tilde{Z} \mid C \not\subseteq \hat{C}) \\ = 2 \cdot \pr\left(\tilde{S}(\ell) < c_0 \tilde{Z}, C \not\subseteq \hat{C}\right) \leq 2 \cdot \pr(\tilde{S}(\ell) < c_0\tilde{Z}).
\end{align*}
This proves the result.
\end{proof}

\begin{proof}[Proof of Theorem \ref{thm:PDS-DT-Completeness}]
First, note that in the dynamic Tardos scheme, the only extra information pirates receive compared to the static Tardos scheme is the fact whether some of them are disconnected. This information is certainly covered by the information contained in the previous values of $p_i$; if pirates receive $p_1, \ldots, p_{i-1}$, then they can calculate their current scores themselves and calculate whether they would have been disconnected or not. Also note that $\tilde{S}(\ell)$ behaves the same as $S(\ell)$ in the static Tardos scheme, where the total coalition score is calculated for all pirates and all positions, regardless of whether they contributed on that position or not. So if we can prove that even in the static Tardos scheme, and even if coalitions get information about the previous values of $p_i$ (for which $y_i$ was already determined), the probability of keeping the coalition score $S(\ell)$ below $c_0\tilde{Z}$ is bounded by $\eps_2/2$, then it follows that also $\pr(\tilde{S}(\ell) < c_0 \tilde{Z}) \leq \eps_2/2$.

For the static Tardos scheme, note that the proof method for the completeness property does not rely on the other values of $p_i$ being secret. In fact, $p_i$ and $p_{i'}$ are independent for $i \neq i'$. The only assumption that is used in that proof is that the Marking Assumption applies, which does apply here, and that the \textit{current} value $p_i$ is hidden before $y_i$ is generated. So here we can also use the proof method of the static Tardos scheme. From the Markov inequality, Lemma~\ref{lem:Completeness} and condition~\eqref{eq:C'}, it thus follows that
\begin{align*}
\pr\left(\tilde{S}(\ell) < c_0\tilde{Z}\right) 
 & \leq \frac{\expn\left(e^{-b \tilde{S}(\ell) c_0^{-5/3}}\right)}{e^{-b (Z + \sqrt{d_{\delta}} c_0^{2/3}) c_0^{-2/3}}} \\ 
 & \leq \left(\frac{\eps_1}{n}\right)^{b \left(\lambda_b d_{\ell} - d_z - \frac{\sqrt{d_{\delta}}}{\ln(n/\eps_1)} c_0^{-1/3}\right) c_0^{1/3}} \\ 
 & \leq \left(\frac{\eps_1}{n}\right)^{\eta + \frac{\ln 2}{\ln(n/\eps_1)}} = \frac{\eps_2}{2}.
\end{align*}
Using Lemma~\ref{lem:DT-Comp} the result then follows.
\end{proof}

\subsection{Codelengths}
\label{sub:DynamicTardos-Optimization}

The requirements \eqref{eq:S'} and \eqref{eq:C'} are only slightly different from requirements \eqref{eq:S} and \eqref{eq:C}. For asymptotically large $c_0$, these differences even disappear, and the optimal asymptotic codelength is the same as in the static Tardos scheme. In Fig.~\ref{fig:Fig2} we show the optimal values of $d_{\ell}$ in the dynamic Tardos scheme for $\eta = 1$ and $\eta = 0.01$. The different curves correspond to different values of $n/\eps_1$, ranging from $n/\eps_1 = 10^3$ (the highest values of $d_{\ell}$) to $n/\eps_1 = 10^{15}$ (the lowest values of $d_{\ell}$).

\begin{figure}[!t]
\centering
\includegraphics[width=\columnwidth]{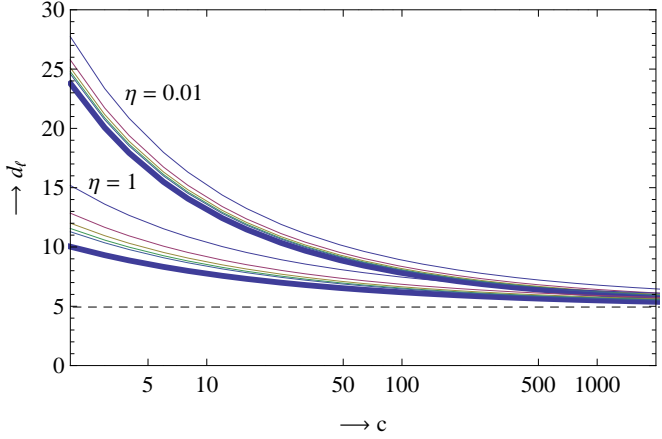}
\caption{Optimal values of $d_{\ell}$ in the dynamic Tardos scheme. The dotted line corresponds to the asymptotic optimal value $d_{\ell} = \frac{\pi^2}{2} \approx 4.93$. The bold curves show the values of $d_{\ell}$ in the static Tardos scheme for $\eta = 1$ (top) and $\eta = 0.01$ (bottom) respectively. The five curves slightly above each of the bold curves show the optimal values of $d_{\ell}$ in the dynamic Tardos scheme for $n/\eps_1 = 10^{3k}$, for $k = 1$ up to $5$. Higher values of $k$ correspond to lower values of $d_{\ell}$.}
\label{fig:Fig2}
\end{figure}

Note that these values of $d_{\ell}$ correspond to the theoretical codelengths such that with probability at least $1 - \eps_1$, by time $\ell$ all of the pirates have been disconnected. This does not mean that the last pirate is likely to be caught \textit{exactly at} time $\ell$; this means that he is likely to be caught \textit{before or at} time $\ell$. So in practice the number of symbols needed to disconnect all traitors may very well be below this theoretical codelength $\ell$, and may even decrease compared to the static Tardos scheme. 

Furthermore, if the coalition size is not known, then one generally uses a traitor tracing scheme that is resistant against up to $c_0 > c$ colluders. \v{S}kori\'{c} et al.~\cite{skoric08} showed that in the Tardos scheme, the total coalition score $S(i) = \sum_{j \in C} S_j(i)$ always increases linearly in $i$ with approximately the same slope, regardless of the actual coalition size $c$ or the employed pirate strategy $\rho$. More precisely, the score $S(i)$ behaves as $S(i) \approx i \tilde{\mu}$, with $\tilde{\mu} \approx \frac{2}{\pi}$ only slightly depending on the coalition size $c$ and the pirate strategy $\rho$. Since one chooses $\ell$ and $Z$ such that $S(\ell) \approx \ell \tilde{\mu} \approx c_0 Z$, it follows that $S(\frac{c}{c_0} \ell) \approx c Z$. In other words, to catch a coalition of size $c \leq c_0$, the expected number of symbols needed is approximately $\ell = O(\frac{c}{c_0} \ell) = O(c c_0 \ln(n/\eps_1))$. So compared to the static Tardos scheme, where the codelength is fixed in advance at $O(c_0^2 \ln(n/\eps_1))$, the codelength is reduced by a factor $\frac{c}{c_0}$. In particular, small coalitions of few pirates are generally caught up to $O(c_0)$ times faster, for $c_0 \gg c$. 

\subsection{Example}
\label{sub:DynamicTardos-Example}

Let the scheme parameters be the same as in Section~\ref{sub:Preliminaries-Example}, i.e., $c_0 = 25$, $n = 10^6$ and $\eps_1 = \eps_2 = 10^{-3}$, so that $\eta = \frac{1}{3}$. The optimal values of $d_{\ell}, d_z, d_{\delta}$ satisfying \eqref{eq:S'} and \eqref{eq:C'} can be calculated numerically as
\begin{align*}
  d_{\ell} = 9.00, \quad d_z = 4.73, \quad d_{\delta} = 13.44
\end{align*}
This leads to the scheme parameters
\begin{align*}
  \ell = 116\,561, \quad Z = 2448, \quad \delta = 1.02 \cdot 10^{-3}
\end{align*}
In Fig.~\ref{fig:Fig3} we show some simulation results for these parameters, with the actual coalition also consisting of $c = c_0 = 25$ colluders. In Fig.~\ref{fig:Fig3a} the pirates used the interleaving attack, and in Fig.~\ref{fig:Fig3b} they used the scapegoat strategy. In both cases, the whole coalition is caught well before we reach $\ell$ symbols.

\begin{figure}[!t]
\centering
\subfloat[][Interleaving attack]{\includegraphics[width=\columnwidth]{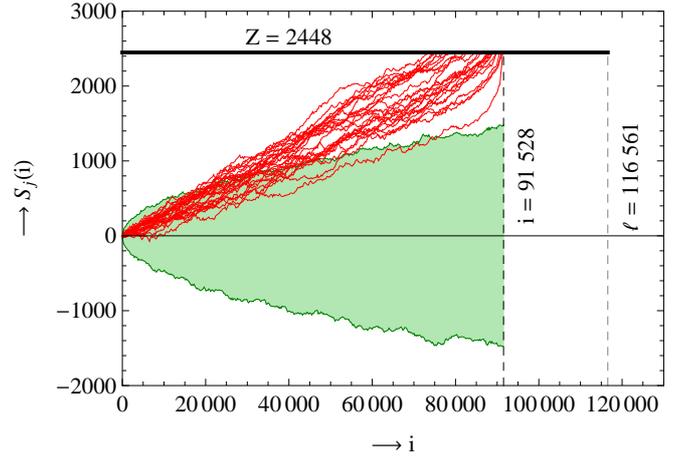} \label{fig:Fig3a}} \\
\subfloat[][Scapegoat strategy]{\includegraphics[width=\columnwidth]{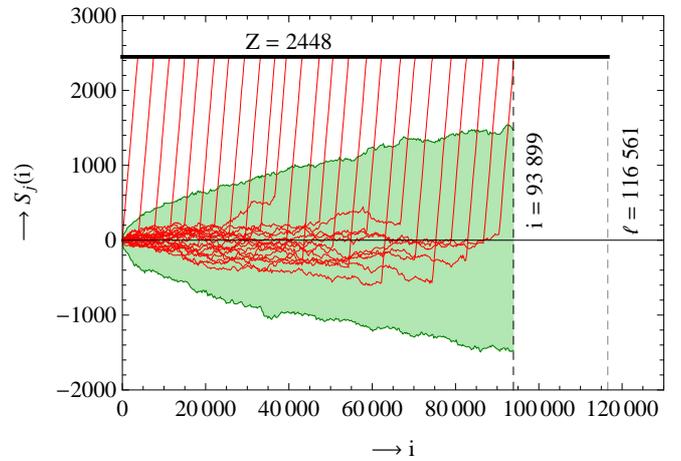} \label{fig:Fig3b}}
\caption{Simulations of the dynamic Tardos scheme, with the same parameters $c$, $c_0$, $n$, $\eps_1$ and $\eps_2$ as in Fig.~\ref{fig:Fig1}. Users are now disconnected as soon as their scores exceed the threshold $Z$, i.e., as soon as their corresponding score curves cross the bold horizontal line. In both cases, after less than $95\,000$ symbols all pirates have been caught, which is less than the theoretical codelength $\ell = 116\,561$, and less than the codelength of the static Tardos scheme with the same parameters, $\ell = 109\,585$.}
\label{fig:Fig3}
\end{figure}

\section{The weakly dynamic Tardos scheme}
\label{sec:SemiDynamicTardos}

In the dynamic Tardos scheme, we need to disconnect users as soon as their scores exceed the threshold $Z$. In some scenarios this may not be possible. For example, the pirates may transmit each symbol with a delay. 

We call a traitor tracing scheme weakly dynamic if $B$ ($B \geq 1$) symbols are distributed during the delay between the original broadcast and the corresponding pirate output. Observe that the dynamic schemes presented in \cite{berkman01, fiat01, roelse11, tassa05} are not weakly dynamic schemes, as these schemes use the value of each symbol to adapt the distribution of the next symbols (i.e. $B = 0$ for these schemes). 

In this section we present two weakly dynamic schemes based on the dynamic Tardos scheme.
First, in Section~\ref{sub:SemiDynamicTardos-First} we present a scheme that achieves a codelength of at most $\ell = d_{\ell} c_0^2 \ln(n/\eps_1) + B c_0$, where $d_{\ell}$ is the same as in the dynamic Tardos scheme for the same parameters. For small values of $B$, this means that with codelength which is only slightly higher than in the dynamic Tardos scheme, we can also catch all pirates in a weakly dynamic setting. Then, in Section~\ref{sub:SemiDynamicTardos-Second} we present a scheme that achieves a codelength of $\ell = d_{\ell,B} c_0^2 \ln(n/\eps_1)$, where $d_{\ell,B}$ increases with $B$. Since a small increase in $d_{\ell}$ can already lead to a big increase in the codelength, the second scheme generally has a larger codelength than the first scheme. 

\subsection{First scheme: $\ell = d_{\ell} c_0^2 \ln(n/\eps_1) + B c_0$}
\label{sub:SemiDynamicTardos-First}

The first scheme is based on the following modification to the accusation algorithm of the dynamic Tardos scheme. Suppose a user's score exceeds $Z$ after $i_0$ positions. At position $i_0$ we now disconnect this user. Since this user may have contributed to the next $B$ symbols of the pirate output $\vec{y}$, we disregard the following $B$ `contaminated' positions of the watermark, and do not update the scores for positions $i \in \{i_0 + 1, \ldots, i_0 + B\}$. After those positions we continue the traitor tracing process as in the dynamic Tardos scheme, and we repeat the above procedure each time a user's score exceeds $Z$. 

With this modification, the traitor tracing process on those positions that were used for calculating scores is identical to the traitor tracing process of the dynamic Tardos scheme. We can therefore use the analysis from Section~\ref{sec:DynamicTardos} and conclude that with at most $d_{\ell} c_0^2 \ln(n/\eps_1)$ positions for which we calculate scores, we can catch any coalition of size $c \leq c_0$. Since we disregarded at most $B c_0$ positions, the pirate broadcast will not last longer than $\ell = d_{\ell} c_0^2 \ln(n/\eps_1) + B c_0$ positions in total, where $d_{\ell}$, $d_z$ and $d_{\delta}$ are as in the dynamic Tardos scheme for the same parameters. This means that with at most $B c_0$ more symbols than in the dynamic Tardos scheme, we can also catch coalitions in this weakly dynamic traitor tracing setting. 

\subsection{Second scheme: $\ell = d_{\ell,B} c_0^2 \ln(n/\eps_1)$}
\label{sub:SemiDynamicTardos-Second}

Instead of using $B c_0$ more symbols, we can also try to adjust the analysis of the dynamic Tardos scheme to the weakly dynamic traitor tracing scenario. We can do this by following the proof methods of the dynamic Tardos scheme, and by making one small adjustment. The change we make in the analysis is to use $\tilde{Z}_B := Z + B \sqrt{d_{\delta}} c_0^{2/3} > Z + B \max_p S_{j,i}(p)$ instead of $\tilde{Z} = Z + \sqrt{d_{\delta}} c_0^{2/3}$ as our new upper bound for the scores of users in the proofs. This results in the following, slightly different condition for dynamic completeness:
\begin{align*}
\exists \ b > 0: b (\lambda_b d_{\ell} - d_z) \geq \left(\eta + \frac{\ln(2) + B b \sqrt{d_{\delta}}}{\ln(n/\eps_1)}\right) c_0^{-1/3}. \tag{C''} \label{eq:C''}
\end{align*}
If some parameters $d_{\ell,B}$, $d_{z,B}$, $d_{\delta,B}$ satisfy \eqref{eq:S'} and \eqref{eq:C''}, then using these constants as our scheme parameters, we obtain a $\eps_1$-sound and dynamic $(\eps_2, c_0)$-complete scheme with a codelength of $\ell = d_{\ell,B} c_0^2 \ln(n/\eps_1)$. In Fig.~\ref{fig:Fig4} we show the values of $d_{\ell,B}$ for the parameters $n = 10^6$, $\eps_1 = \eps_2 = 10^{-3}$, and $\eta = \frac{1}{3}$, for several values of $B$. As the value of $B$ increases, the values of $d_{\ell,B}$ increase as well.

\begin{figure}[!t]
\centering
\includegraphics[width=\columnwidth]{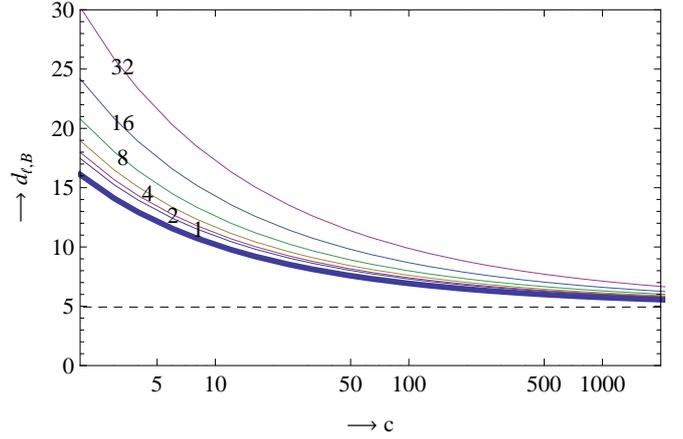}
\caption{Optimal values of $d_{\ell,B}$ in the weakly dynamic Tardos scheme from Section~\ref{sub:SemiDynamicTardos-Second}, for the parameters $n = 10^6$, $\eps_1 = \eps_2 = 10^{-3}$, and $\eta = \frac{1}{3}$. The bold curve corresponds to the values of $d_{\ell}$ in the static Tardos scheme with the same parameters, while the six curves above this curve correspond to the optimal values of $d_{\ell,B}$ for $B = 1, 2, 4, 8, 16, 32$ respectively. The dotted line corresponds to the asymptotic optimal value $d_{\ell} = \frac{\pi^2}{2} \approx 4.93$. For $B = 1$ we get exactly the codelengths of the dynamic Tardos scheme.}
\label{fig:Fig4}
\end{figure}

\subsection{Example}
\label{sub:SemiDynamicTardos-Example}

As before, let the scheme parameters be given by $c_0 = 25$, $n = 10^6$ and $\eps_1 = \eps_2 = 10^{-3}$, so that $\eta = \frac{1}{3}$, and let us use $B = 8$. With the first proposed scheme, the codelength increases by $B c_0 = 200$ symbols compared to the dynamic Tardos scheme, giving scheme parameters:
\begin{align*}
  \ell = 116\,761, \quad Z = 2448, \quad \delta = 1.02 \cdot 10^{-3}.
\end{align*}
Using the second scheme, the optimal values of $d_{\ell,B}, d_{z,B}, d_{\delta,B}$ satisfying \eqref{eq:S'} and \eqref{eq:C''} for $B = 8$ can be calculated numerically as
\begin{align*}
  d_{\ell,B} = 10.16, \quad d_{z,B} = 4.94, \quad d_{\delta,B} = 10.07.
\end{align*}
This leads to the scheme parameters
\begin{align*}
  \ell = 131\,587, \quad Z = 2561, \quad \delta = 1.36 \cdot 10^{-3}.
\end{align*}
So in this case, using the first scheme leads to the shortest code. 

\section{The universal Tardos scheme}
\label{sec:UniversalTardos}

In this section we present a dynamic scheme that does not require a sharp upper bound $c_0$ on $c$ as input to guarantee quick detection of pirates. This means that even if we set $c_0 = n$, coalitions of any size are caught quickly. We use the word ``universal'' to indicate this universality with respect to the coalition size: coalitions of any size can be caught efficiently with this scheme. Note that in the (dynamic) Tardos scheme, we used the distribution function $F_{\delta}$ where $\delta = \delta(c_0) = O(c_0^{-4/3})$ depends on $c_0$. Instead, we will use a distribution function $F$ that can be used for all values of $c$, so that we can use the same codewords to catch coalitions of any size. In particular, we will use the first $\ell^{(c)} = O(c^2 \ln(n/\eps_1))$ symbols to catch coalitions of size $c$, for each $c$ between $2$ and $c_0$. We do this in such a way that if a coalition has some unknown size $c$, then after $\ell^{(c)} = O(c^2 \ln(n/\eps_1)$ symbols, the probability of not having caught all members of this coalition is at most $\eps_2$. Since we do this for each value of $c$, we now only need $O(c^2 \ln(n/\eps_1))$ symbols to catch a coalition of a priori unknown size $c$, compared to the $O(c_0^2 \ln(n/\eps_1))$ worst-case codelength of the static and dynamic Tardos schemes, and the $O(c c_0 \ln(n/\eps_1))$ practical codelength of the dynamic Tardos scheme.

The only drawback of this new codeword generation method is that a completely universal distribution function, which is completely efficient for all values of $c$, does not seem to exist. More precisely, the proof of soundness of the Tardos scheme requires the cutoff parameter $\delta$ to be sufficiently large in terms of $c$, whereas for completeness we need that $\delta$ approaches $0$ as $c \to \infty$. Our solution to this problem is the following. For generating the values of $p_i$, we use the standard arcsine distribution function $F$ from Eq.~\eqref{dist2}, with no cutoffs. Then, for each value of $c$, we simply disregard those values $p_i$ that are not between the corresponding cutoff $\delta^{(c)}$ and $1 - \delta^{(c)}$. The fraction of values of $p_i$ that is disregarded can be estimated as follows:
\begin{align*}
1 - \int_{\delta^{(c)}}^{1 - \delta^{(c)}} \! f(p) \, \mathrm{d}p = \frac{4}{\pi} \arcsin \sqrt{\delta^{(c)}} = \frac{4 c^{-2/3}}{\pi \sqrt{d_{\delta}}} + O(c^{-2}).
\end{align*}
So the fraction of disregarded positions is very small and decreases when $c$ increases.

\subsection{Construction}
\label{sub:UniversalTardos-Construction}

The construction now basically consists of running several dynamic Tardos schemes simultaneously with shared codewords. So scheme parameters and scores now have to be calculated for each of these schemes, i.e., for each of the values of $c$. We introduce counters $t^{(c)}$ to keep track of the number of positions that have not been disregarded. For each $c$, we then run a dynamic Tardos scheme using the same code $X$ until $t^{(c)} = \ell^{(c)}$.

\begin{enumerate}
  \item \textbf{Initialization phase} \\
	For each $c \in \{2, \ldots, c_0 = n\}$:
  \begin{enumerate}
    \item Take the codelength as $\ell^{(c)} = d_{\ell}^{(c)} c^2 \ln(n/\eps_1^{(c)})$.
		\item Take the threshold as $Z^{(c)} = d_z^{(c)} c \ln(n/\eps_1^{(c)})$.
		\item Take the cutoff parameter as $\delta^{(c)} = 1/(d_{\delta}^{(c)} c^{4/3})$.
		\item Initialize the user scores at $S_j^{(c)}(0) = 0$.
		\item Initialize the counters $t^{(c)}$ at $t^{(c)}(0) = 0$.
  \end{enumerate}
  \item \textbf{Codeword generation} \\
	For each position $i \geq 1$:
  \begin{enumerate}
    \item Select $p_i \in [0, 1]$ from $F(p)$ as defined in \eqref{dist2}.
    \item Generate $X_{j,i} \in \{0, 1\}$ using $\pr(X_{j,i} = 1) = p_i$.
  \end{enumerate}
  \item \textbf{Distribution/Detection/Accusation} \\
	For each position $i \geq 1$:
	\begin{enumerate}
	  \item Send to each active user $j$ symbol $X_{j,i}$.
		\item Detect the pirate output $y_i$. \\
		      (If there is no pirate output, terminate.)
		\item Calculate scores $S_{j,i}$ using \eqref{scoresym}.
    \item For active users $j$ and values $c$ such that $p_i \in [\delta^{(c)}, 1 - \delta^{(c)}]$, set $S_j^{(c)}(i) = S_j^{(c)}(i-1) + S_{j,i}$. \\
		      (Otherwise set $S_j^{(c)}(i) = S_j^{(c)}(i-1)$.)
		\item For values of $c$ such that $p_i \in [\delta^{(c)}, 1 - \delta^{(c)}]$, set $t^{(c)}(i) = t^{(c)}(i-1) + 1$. \\
		      (Otherwise set $t^{(c)}(i) = t^{(c)}(i-1)$.)
		\item Disconnect all active users $j$ with $S_j^{(c)}(i) > Z^{(c)}$ and $t^{(c)}(i) \leq \ell^{(c)}$ for some $c$.
	\end{enumerate}
\end{enumerate}

As was already mentioned in Section~\ref{sub:DynamicTardos-Construction}, if desired the codeword generation can be merged with the distribution/detection/accusation phase. This depends on the scenario and the exact implementation of the scheme. 

Also note that several variations can be made to the above construction, to deal with specific situations. One could easily replace $c_0 = n$ by a smaller value of $c_0$ to restrict the amount of memory needed, if a sharper upper bound on $c$ is known. And of course, we may also choose to draw values $p_i$ from $F_{\delta^{(c_0)}}$, as values $p_i \in [0,1] \setminus [\delta^{(c_0)}, 1 - \delta^{(c_0)}]$ are disregarded for all $c$. 

A less obvious optimization would be to use a geometric progression of values $c$, e.g., $c \in \{2, 4, 8, 16, \ldots, c_0\}$ and maintain the user scores only for this set of coalition sizes, rather than for all values of $c \in \{2, \ldots, c_0\}$. This significantly reduces the space requirement per user from $O(c_0)$ to $O(\log c_0)$. However, if the actual coalition size is, say, $33$, then the coalition may not be caught until we reach $c = 64$. Since the codelength scales quadratically in $c$, this means that the codelength increases by a worst-case factor of $4$. In general, using any geometric progression with geometric factor $r$ possibly loses a factor $r^2$ in the codelengths. We have chosen to give the construction with many scores per user, to show that we then still obtain the same asymptotic codelengths. But the above construction is just one of the many alternatives to catch coalitions of any size efficiently. 

\subsection{Soundness}
\label{sub:UniversalTardos-Soundness}

For the universal Tardos scheme we get the following result regarding soundness.

\begin{theorem} \label{thm:PDS-UT-Soundness}
Consider the universal Tardos scheme in Section~\ref{sub:UniversalTardos-Construction}. If \eqref{eq:S'} is satisfied for each set of parameters $d_z^{(c)}, d_{\ell}^{(c)}, d_{\delta}^{(c)}, \eps_1^{(c)}$, and if the $\eps_1^{(c)}$ satisfy the following requirement:
\begin{align*}
\sum_{c=2}^{c_0} \eps_1^{(c)} \leq \eps_1, \tag{E} \label{eq:E}
\end{align*}
then the scheme is $\eps_1$-sound.
\end{theorem}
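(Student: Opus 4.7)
The plan is to reduce the soundness analysis of the universal scheme to $c_0 - 1$ parallel instances of the dynamic Tardos scheme---one for each value of $c$---and then combine them via condition~\eqref{eq:E}. The first step is to write the universal scheme's accused set as $\hat{C} = \bigcup_{c=2}^{c_0} \hat{C}^{(c)}$, where $\hat{C}^{(c)} = \{j \in U : S_j^{(c)}(i) > Z^{(c)} \text{ for some } i \text{ with } t^{(c)}(i) \leq \ell^{(c)}\}$ is the set of users flagged by the $c$-th sub-process; the user is disconnected iff at least one $c$ would disconnect them.

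Next I would argue that, for each fixed $c$, the $c$-th sub-process is stochastically identical to a standalone dynamic Tardos scheme with parameters $d_\ell^{(c)}, d_z^{(c)}, d_\delta^{(c)}, \eps_1^{(c)}$. The key observation is that the $p_i$ are i.i.d.\ with density $f$, and the conditional density of $p_i$ given $p_i \in [\delta^{(c)}, 1 - \delta^{(c)}]$ is exactly $f_{\delta^{(c)}}$; hence the subsequence of ``active'' positions for parameter $c$ behaves precisely as if one had drawn $\ell^{(c)}$ fresh positions from the $\delta^{(c)}$-cutoff Tardos distribution. Positions with $p_i$ outside the band contribute nothing to $S_j^{(c)}$, so they cannot affect the sub-scheme's accusation rule. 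The pirates may additionally observe these discarded positions, as well as the states of the other sub-schemes, but this is merely additional side information, and the soundness argument behind Theorem~\ref{thm:PDS-DT-Soundness} is robust to such conditioning since an innocent user's codeword $\vec{X}_j$ is independent of everything the pirates see.

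Invoking Theorem~\ref{thm:PDS-DT-Soundness} for each sub-scheme---which is valid because, by hypothesis, \eqref{eq:S'} holds for every parameter set $(d_\ell^{(c)}, d_z^{(c)}, d_\delta^{(c)}, \eps_1^{(c)})$---gives $\pr(\hat{C}^{(c)} \not\subseteq C) \leq \eps_1^{(c)}$. A union bound over $c$ then yields
\begin{align*}
\pr(\hat{C} \not\subseteq C) \leq \sum_{c=2}^{c_0} \pr(\hat{C}^{(c)} \not\subseteq C) \leq \sum_{c=2}^{c_0} \eps_1^{(c)} \leq \eps_1,
\end{align*}
where the last inequality is exactly \eqref{eq:E}. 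The only real obstacle is making the sub-scheme equivalence rigorous: one must check that restricting the shared codeword process to the active positions for parameter $c$ genuinely produces the same joint distribution of $(p_i, X_{j,i}, y_i)$ that an independent dynamic Tardos scheme with cutoff $\delta^{(c)}$ would produce. This reduces to the independence of the $p_i$ across positions together with the standard rejection-sampling fact that conditioning an arcsine-distributed $p$ on lying in $[\delta, 1 - \delta]$ yields the $\delta$-cutoff arcsine distribution. Once this identification is made, the remainder of the proof is a direct appeal to Theorem~\ref{thm:PDS-DT-Soundness} and a one-line union bound.
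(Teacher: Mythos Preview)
Your proposal is correct and follows essentially the same route as the paper: decompose $\hat{C} = \bigcup_{c=2}^{c_0} \hat{C}^{(c)}$, invoke Theorem~\ref{thm:PDS-DT-Soundness} for each sub-scheme, and finish with a union bound over $c$ using condition~\eqref{eq:E}. The paper carries out the union bound at the per-user level (bounding $\pr(j \in \hat{C}^{(c)}) \leq \eps_1^{(c)}/n$ and summing to $\eps_1/n$) rather than at the event level as you do, and it does not spell out the rejection-sampling justification for why the $c$-th sub-process matches a standalone dynamic Tardos scheme with cutoff $\delta^{(c)}$; your extra paragraph on that point is a welcome clarification, but the overall argument is the same.
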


\begin{proof}
For each $c \in \{2, \ldots, c_0\}$, let $\hat{C}^{(c)}$ be the set of users that are accused because their scores $S_j^{(c)}$ exceeded $Z^{(c)}$ before $t^{(c)} > \ell^{(c)}$. Then $\hat{C} = \bigcup_{c=2}^{c_0} \hat{C}^{(c)}$. For any $c$, we can apply Theorem~\ref{thm:PDS-DT-Soundness} to the parameters $d_{\ell}^{(c)}, d_z^{(c)}, d_{\delta}^{(c)}, a^{(c)}$ and $\eps_1^{(c)}$ so that we know that the probability that $j \in \hat{C}^{(c)}$ for innocent users $j$ is at most $\eps_1^{(c)}/n$. So the overall probability that an innocent user is disconnected is bounded from above by
\begin{align*}
\pr(j \in \hat{C}) \leq \sum_{c=2}^{c_0} \pr(j \in \hat{C}^{(c)}) \leq \sum_{c=2}^{c_0} \frac{\eps_1^{(c)}}{n} \leq \frac{\eps_1}{n}.
\end{align*}
This completes the proof.
\end{proof}

Note that one can choose values $\eps_1^{(c)}$ satisfying \eqref{eq:E} such that $O(c^2 \ln(n/\eps_1^{(c)})) = O(c^2 \ln(n/\eps_1))$, e.g., by taking $\eps_1^{(c)} = 6\eps_1/(\pi^2 c^2)$. If furthermore $c = n^{o(1)}$ is subpolynomial in $n$, then asymptotically $d_{\ell} c^2 \ln(n/\eps_1^{(c)}) = d_{\ell} c^2 \ln(n/\eps_1) (1 + o(1))$ and we achieve the same asymptotic codelength as in the static and dynamic Tardos schemes.

\subsection{Dynamic completeness}
\label{sub:UniversalTardos-Completeness}

The main advantage of the universal Tardos scheme is that we can now prove dynamic completeness for all values of $c$.

\begin{theorem} \label{thm:PDS-UT-Completeness}
Consider the universal Tardos scheme in Section~\ref{sub:UniversalTardos-Construction}. If \eqref{eq:C'} is satisfied for each set of parameters $d_z^{(c)}, d_{\ell}^{(c)}, d_{\delta}^{(c)}, \eps_1^{(c)}, \eta^{(c)}$, where $\eta^{(c)} = \ln(1/\eps_2)/\ln(n/\eps_1^{(c)})$, then for each $c \in \{2, \ldots, c_0\}$ the scheme is dynamic $(\eps_2, c)$-complete.
\end{theorem}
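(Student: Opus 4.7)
Fix an arbitrary $c \in \{2, \ldots, c_0\}$ and suppose that the actual coalition $C$ has size at most $c$. The plan is to reduce the analysis of the $c$-th ``virtual'' tracker $(S_j^{(c)}, Z^{(c)}, \ell^{(c)})$ inside the universal scheme to a single instance of the dynamic Tardos scheme of Section~\ref{sec:DynamicTardos}, and then invoke Theorem~\ref{thm:PDS-DT-Completeness} directly.

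The key observation is that $S_j^{(c)}$ is updated only on the subsequence of positions $i$ for which $p_i \in [\delta^{(c)}, 1-\delta^{(c)}]$, and on these positions the counter $t^{(c)}$ advances by one. Conditioning the standard arcsine density $f$ from \eqref{dist2} on this interval recovers exactly the truncated arcsine density $f_{\delta^{(c)}}$, so the marginal distribution of $(p_i)$ restricted to this subsequence is precisely the one used in the dynamic Tardos construction with cutoff $\delta^{(c)}$. Re-indexing the relevant positions by $t = 1, 2, \ldots, \ell^{(c)}$ therefore gives an embedded dynamic Tardos scheme with parameters $d_{\ell}^{(c)}, d_z^{(c)}, d_{\delta}^{(c)}, \eps_1^{(c)}$, in which pirate disconnection based on $S_j^{(c)} > Z^{(c)}$ matches the disconnection rule of Section~\ref{sub:DynamicTardos-Construction}. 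If all members of $C$ are caught by some such embedded tracker, then in particular $C \subseteq \hat{C}$ in the universal scheme, so it suffices to prove that the $c$-th tracker catches $C$ with probability at least $1-\eps_2$.

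The one subtlety is that in the universal scheme the pirates may see more than in a standalone dynamic Tardos run: on discarded positions ($p_i \notin [\delta^{(c)}, 1-\delta^{(c)}]$) they still receive watermark symbols, and they may deduce information from disconnections triggered by other values of $c' \neq c$. However, exactly as in the proof of Theorem~\ref{thm:PDS-DT-Completeness}, the completeness argument from Lemma~\ref{lem:Completeness} is insensitive to such side information: the lemma only requires independence of the $p_i$ across positions and that the current $p_i$ be hidden until $y_i$ is produced. Any extra information the pirates obtain about past or discarded $p_{i'}$ is already ``covered'' by the conditioning argument used there, and cannot bias the one-step moment-generating function estimates.

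Hence, applying the dynamic Tardos completeness theorem with $\eta^{(c)} = \ln(1/\eps_2)/\ln(n/\eps_1^{(c)})$ (which is exactly the $\eta$ appearing in \eqref{eq:C'} when the soundness error is $\eps_1^{(c)}$ and the completeness error target is $\eps_2$), we obtain
\begin{align*}
\pr\bigl(C \not\subseteq \hat{C}^{(c)}\bigr) \leq \eps_2,
\end{align*}
where $\hat{C}^{(c)}$ denotes the set of users disconnected by the $c$-th tracker before $t^{(c)}$ reaches $\ell^{(c)}$. Since $\hat{C}^{(c)} \subseteq \hat{C}$, we conclude $\pr(C \not\subseteq \hat{C}) \leq \eps_2$, proving dynamic $(\eps_2, c)$-completeness. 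The main obstacle, as usual for these schemes, is the legitimacy of the reduction, i.e.\ verifying that the embedded subsequence really does reproduce the distribution and information model of the dynamic Tardos scheme; once that is granted, the result is a direct consequence of Theorem~\ref{thm:PDS-DT-Completeness}.
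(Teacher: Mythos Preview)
Your approach is the same as the paper's: reduce the $c$-th ``layer'' of the universal scheme to an instance of the dynamic Tardos scheme and invoke Theorem~\ref{thm:PDS-DT-Completeness}. The paper's own proof is in fact just the single sentence ``apply Theorem~\ref{thm:PDS-DT-Completeness} to the parameters $d_{\ell}^{(c)}, d_z^{(c)}, d_{\delta}^{(c)}, \eps_1^{(c)}$,'' so your more careful discussion of why the restricted $p_i$'s have law $f_{\delta^{(c)}}$ and why side information does not spoil Lemma~\ref{lem:Completeness} only adds detail the paper omits.

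There is, however, one step that does not go through as written. You conclude $\pr(C \not\subseteq \hat{C}^{(c)}) \leq \eps_2$ with $\hat{C}^{(c)}$ defined as the users disconnected \emph{by the $c$-th tracker}. But in the universal scheme a pirate may be disconnected by some other tracker $c' \neq c$ before his $c$-th score ever reaches $Z^{(c)}$; his $c$-th score then freezes below $Z^{(c)}$ and he never enters $\hat{C}^{(c)}$. So $C \not\subseteq \hat{C}^{(c)}$ can occur with probability much larger than $\eps_2$, and this is not the event Theorem~\ref{thm:PDS-DT-Completeness} controls. The inclusion $\hat{C}^{(c)} \subseteq \hat{C}$ then points the wrong way for what you need. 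The clean fix is to bound $\pr(C \not\subseteq \hat{C})$ directly by rerunning the argument of Lemma~\ref{lem:DT-Comp}: if some pirate is never disconnected by \emph{any} tracker, then every pirate's frozen $c$-th score is still at most $\tilde{Z}^{(c)}$ (since at the step before disconnection it was $\leq Z^{(c)}$, regardless of which tracker triggered it), so $S^{(c)}(\ell^{(c)}) < c\,\tilde{Z}^{(c)}$, and the random-walk plus Lemma~\ref{lem:Completeness} argument finishes as before. Everything else in your sketch is fine.
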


\begin{proof}
This follows directly from applying Theorem~\ref{thm:PDS-DT-Completeness} to $d_{\ell}^{(c)}, d_z^{(c)}, d_{\delta}^{(c)}, a^{(c)}$ and $\eps_1^{(c)}$, where $c$ is the actual (unknown) coalition size.
\end{proof}

To prove that the scheme catches a coalition of size $c$, we only argued that the coalition's score $S^{(c)}(i)$ will exceed $c Z^{(c)}$ before we have seen $\ell^{(c)}$ positions $i$ with $p_i \in [\delta^{(c)}, 1 - \delta^{(c)}]$. In reality, the probability of catching the coalition is much larger than this, since for instance with high probability the coalition score $S^{(c-1)}$ will also exceed $Z^{(c-1)}$ before we have seen $\ell^{(c-1)}$ positions with $p_i \in [\delta^{(c-1)}, 1 - \delta^{(c-1)}]$. And if a pirate is disconnected because for some $k$ his score $S_j^{(k)}$ exceeded the threshold $Z^{(k)}$, then we do not have to wait until $S(i) > c\tilde{Z}^{(c)}$ but only until $S(i) > (c-1) Z^{(c)} + Z^{(k)}$. And since $S(i)$ has a constant slope, as soon as a pirate is caught, the other pirates' scores will increase even faster. In practice we therefore also see that we usually need fewer than $\ell^{(c)}$ positions to catch $c$ colluders.

\subsection{Codelengths}
\label{sub:UniversalTardos-Codelengths}

The theoretical results from the previous subsections are not for exactly $\ell^{(c)}$ watermark positions, but for some number of symbols $T^{(c)}$ such that there are $\ell^{(c)}$ positions $i$ between $1$ and $T^{(c)}$ with $p_i \in [\delta^{(c)}, 1 - \delta^{(c)}]$. The difference $T^{(c)} - \ell^{(c)}$ is a random variable, and is distributed according to a negative binomial distribution with parameters $r = \ell^{(c)}$ (the number of successes we are waiting for) and $p = 1 - \pr(p_i \in [\delta^{(c)}, 1 - \delta^{(c)}]) = \frac{4}{\pi} \arcsin(\sqrt{\delta^{(c)}})$ (the probability of a success). Because the parameter $p = O(c^{-2/3})$ is very small for large $c$, the difference between $T^{(c)}$ and $\ell^{(c)}$ will also be small. More precisely, $T^{(c)}$ has mean $\ell^{(c)}/(1 - p) = \ell^{(c)}(1 + O(c^{-2/3}))$ and variance $\sigma^2 = \ell^{(c)} p / (1 - p)^2 = O(\ell^{(c)} c^{-2/3})$, and the probability that $T^{(c)}$ exceeds its mean by $m > 0$ decreases exponentially in $m$. 

Also note that if some upper bound $c_0 \geq c$ is used for constructing the scheme as described earlier, and if the values of $p_i$ are drawn from $F_{\delta^{(c_0)}}$ instead of $F$, then we have $T^{(c_0)} = \ell^{(c_0)}$, as no values of $p_i$ are disregarded for $c = c_0$. So then the maximum codelength is fixed in advance, at the cost of possibly not catching coalitions of size $c > c_0$.

Finally, note that this scheme is constructed in such a way that coalitions of any (small) size can be caught more efficiently. To catch a coalition of size $c$ we now only use $O(c^2 \ln(n/\eps_1))$ symbols. This in comparison to the static and dynamic Tardos scheme, where we need $O(c_0^2 \ln(n/\eps_1))$ and $O(c c_0 \ln(n/\eps_1))$ symbols respectively, where $c_0$ is again some upper bound on the coalition size used to construct the schemes. So while using the dynamic Tardos scheme already reduces the codelength by a factor $\frac{c}{c_0}$, the universal Tardos scheme shaves off another factor $\frac{c}{c_0}$.

\subsection{Example}
\label{sub:UniversalTardos-Example}

As before, let the scheme parameters be given by $n = 10^6$ and $\eps_1 = \eps_2 = 10^{-3}$. Let us use $\eps_1^{(c)} = 6\eps_1/(\pi^2 c^2)$, so that $\sum_{c=2}^{c_0 = n} \eps_1^{(c)} \leq \eps_1$. Let us assume the coalition again has an actual size of $c = 25$. The optimal values of $d_{\ell}^{(25)}, d_z^{(25)}, d_{\delta}^{(25)}$ satisfying \eqref{eq:S'} and \eqref{eq:C'} can be calculated numerically as
\begin{align*}
  d_{\ell}^{(25)} = 8.59, \quad d_z^{(25)} = 4.61, \quad d_{\delta}^{(25)} = 13.83.
\end{align*}
This leads to the corresponding scheme parameters
\begin{align*}
  \ell^{(25)} = 148\,457, \ Z^{(25)} = 3188, \ \delta^{(25)} = 9.89 \cdot 10^{-4}.
\end{align*}
In Fig.~\ref{fig:Fig5} we show some simulation results for these parameters, where we only show the thresholds $Z^{(2)}, \ldots, Z^{(25)}$. In Fig.~\ref{fig:Fig5a} we simulated pirates using the interleaving attack, and in Fig.~\ref{fig:Fig5b} the pirates used the scapegoat strategy. As one can see, in the universal Tardos scheme the scapegoat strategy is not a good strategy, as the whole coalition is caught very soon. This is because the scapegoat strategy basically divides the coalition in $25$ coalitions of size $1$, and as mentioned before, small coalitions are caught much sooner in the universal Tardos scheme.

\begin{figure}[!t]
\centering
\subfloat[][Interleaving attack]{\includegraphics[width=\columnwidth]{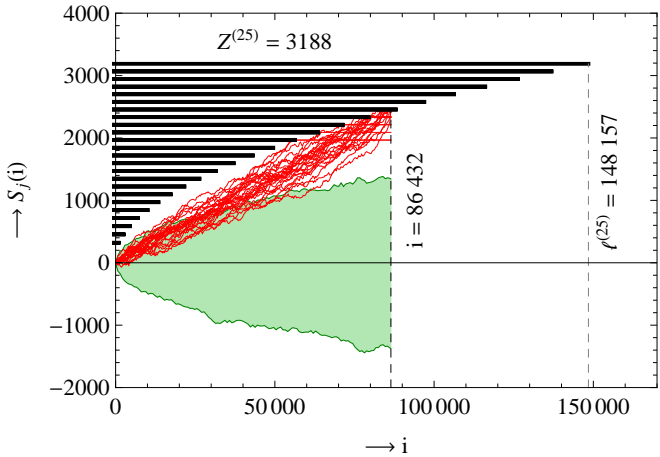} \label{fig:Fig5a}} \\
\subfloat[][Scapegoat strategy]{\includegraphics[width=\columnwidth]{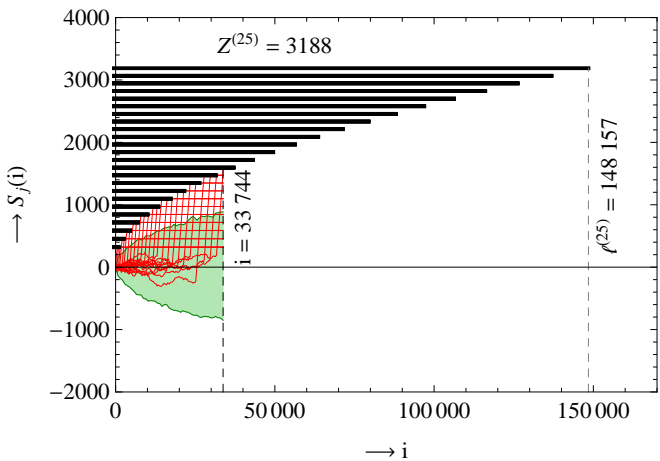} \label{fig:Fig5b}}
\caption{Simulations of the universal Tardos scheme, with parameters $c$, $c_0$, $n$, $\eps_1$, and $\eps_2$ as in Figs.~\ref{fig:Fig1} and \ref{fig:Fig3}. The black bars show the thresholds $Z^{(c)}$, for $c = 2, \ldots, 25$. For each pirate $j$ we only show the score $S_j^{(c)}(i)$ that made him get caught. In reality, all users have $25$ slightly different scores.}
\label{fig:Fig5}
\end{figure}

\section{Discussion}
\label{sec:Discussion}

Comparing the universal Tardos scheme to the static Tardos scheme, we see that the main advantages are that (a) we now have certainty about catching the whole coalition (instead of at least one pirate), and (b) we no longer need the coalition size, or a sharp upper bound on the coalition size, as input. We do need to calculate multiple scores per user, namely one for each possible coalition size $c$. But since the only disadvantage of a large $c_0$ is this larger number of scores per user and thus a larger offline space requirement (which may not be a big issue), $c_0$ can easily be much higher than the expected coalition size $c$. This in contrast to the static and dynamic Tardos schemes, where an increase in $c_0$ means an increase in the theoretical and practical codelengths as well.

In Table \ref{tab1} we list some of the differences between the static, dynamic, weakly dynamic and universal Tardos schemes. Here we assume that the upper bound $c_0$ on the number of colluders is the same for each scheme. The actual coalition size is denoted by $c$. The example referred to in the table is the example used throughout this paper, with $c = c_0 = 25$, $n = 10^6$, and $\eps_1 = \eps_2 = 10^{-3}$. The practical codelengths are based on $1000$ simulations for each scheme, where the pirates used the interleaving attack in all cases. For the weakly dynamic Tardos scheme we used $B = 8$ in our example.

\begin{table*}[!t]
\caption{A comparison of the Tardos schemes discussed in this paper. \label{tab1}}
\centering
\begin{tabular}{lccccc} \hline
 & static & dynamic & weakly dynamic & weakly dynamic & universal \\ 
 & (Section~\ref{sec:Preliminaries}) & (Section~\ref{sec:DynamicTardos}) & (Section~\ref{sub:SemiDynamicTardos-First}) & (Section~\ref{sub:SemiDynamicTardos-Second}) & (Section~\ref{sec:UniversalTardos}) \\ \hline 
scores per user & $1$ & $1$ & $1$ & $1$ & $c_0 - 1$ \\ 
density function & $f^{(c_0)}$ & $f^{(c_0)}$ & $f^{(c_0)}$ & $f^{(c_0)}$ & $f$ \\ 
blocks & $1$ of size $\ell$ & $\ell$ of size $1$ & $\ell/B$ of size $B$ & $\ell/B$ of size $B$ & $\ell$ of size $1$ \\ 
guilty caught & at least $1$ & all $c$ & all $c$ & all $c$ & all $c$ \\ 
expected codelength & $O(c_0^2 \ln(n/\eps_1))$ & $O(c c_0 \ln(n/\eps_1))$ & $O(c c_0 \ln(n/\eps_1))$ & $O(c c_0 \ln(n/\eps_1))$ & $O(c^2 \ln(n/\eps_1))$ \\ 
asymptotic codelength & $\frac{\pi^2}{2} c^2 \ln(n/\eps_1)$ & $\frac{\pi^2}{2} c^2 \ln(n/\eps_1)$ & $\frac{\pi^2}{2} c^2 \ln(n/\eps_1)$ & $\frac{\pi^2}{2} c^2 \ln(n/\eps_1)$ & $\frac{\pi^2}{2} c^2 \ln(n/\eps_1)$ \\ 
example, theoretical codelength & $109\,585$ & $116\,561$ & $116\,761$ & $131\,587$ & $148\,457$ \\ 
example, practical codelength & $109\,585$ & $92\,000$ & $92\,000$ & $96\,000$ & $89\,000$ \\ \hline
\end{tabular}
\end{table*}

Since our schemes are dynamic traitor tracing schemes, it makes sense to also compare them to other dynamic schemes from the literature. Recall from Section~\ref{sub:Introduction-RelatedWork} that the scheme of Fiat and Tassa~\cite{fiat01}, the schemes of Berkman et al.~\cite{berkman01} and the scheme of Roelse~\cite{roelse11} are deterministic schemes. That is, each of these schemes always catches all pirates and no user is ever falsely accused, which are advantages compared to probabilistic schemes such as our schemes. An additional advantage of these schemes is that they have very short codelengths. On the other hand, it was shown by Fiat and Tassa~\cite{fiat01} that $q \geq c+1$ for any deterministic scheme, so these schemes cannot be used in scenarios in which a small alphabet size is required. 

As is the case with our schemes, the dynamic scheme of Tassa~\cite{tassa05} is probabilistic and uses a binary alphabet (i.e., $q = 2$). The codelengths of these schemes can therefore be compared directly. In particular, the codelength of the scheme of Tassa is $\Theta(c^4 \log_2(n) \ln(n/\eps_1))$, which is more than a factor $\Theta(c^2)$ larger than the codelengths of our schemes. In fact, to the best of our knowledge our schemes have the shortest order codelengths of all known binary dynamic traitor tracing schemes.


Below we list some other nice properties of the universal Tardos scheme, which are not related to the codelength or the alphabet size. Most of these properties are inherited from the static Tardos scheme.


\paragraph{Codewords of users are independent.} This means that framing a specific innocent user is basically impossible, as the codewords of the pirates and the pirate output are independent of the innocent users' codewords. Also, a new user can be added to the system easily after the codewords of other users have already been generated, since the codewords of other users do not have to be updated.

\paragraph{Codeword positions are independent.} In other words, the scheme does not need the information obtained from the previous pirate output to generate new symbols for each user. Therefore the codewords can even be generated in advance. This also allows us to effectively tackle weakly dynamic traitor tracing scenarios, as described in Section~\ref{sec:SemiDynamicTardos}. 
In particular, the total tracing times of the dynamic schemes presented in \cite{berkman01, fiat01, roelse11, tassa05} are bounded from below by the total delay, defined as the codelength of the scheme times the delay of the pirates' transmission. By comparison, the total tracing times of our weakly dynamic schemes only increase marginally if $B$ increases. As a result, for a large delay (i.e. for a large value of $B$), our weakly dynamic schemes have the shortest total tracing times of all known dynamic schemes.
%

\paragraph{The distribution of watermark symbols is identical for each position.} This property offers new options, like tracing several coalitions simultaneously, using the same traitor tracing code. This also means that multiple watermarks from several broadcasts can be concatenated and viewed as one long watermark from one longer broadcast, allowing one to catch large coalitions with multiple watermarked broadcasts.

\paragraph{The codeword generation and accusation algorithm are computationally and memory-wise efficient.} The schemes do not require any complicated data structures and computations, and the only memory needed during the broadcast is the scores for each user at that time, and the counters $t^{(c)}$. During the broadcast only simple calculations are needed: computing $S_{j,i}$ (which has to be calculated only once), adding $S_{j,i}$ to those scores $S_j^{(c)}$ where $c$ satisfies a certain condition, and comparing the scores $S_j^{(c)}$ to the thresholds $Z^{(c)}$. 

\paragraph{Several instances of the scheme can be run simultaneously.} For example, by using parameters $\{\eps_1^{(c)}\}$ with $\sum \eps_1^{(c)} \leq 0.01$ and $\{\bar{\eps}_1^{(c)}\}$ with $\sum \bar{\eps}_1^{(c)} \leq 0.05$ for two different instances of the universal Tardos scheme (using the same codewords), a pirate will first cross one of the thresholds associated to $\{\bar{\eps}_1\}$, and only later cross one of the thresholds associated to the $\{\eps_1\}$. If we use the $\{\eps_1\}$ for disconnecting users, then even before a user is disconnected, we can give some sort of statistic to indicate the `suspiciousness' of this user. If a user then does not cross the highest thresholds, one could still decide whether to disconnect him or not. After all, the choice of $\eps_1$ may be arbitrary, and a user that almost crosses the thresholds $Z^{(c)}$ is likely to be guilty as well.

\section{Open problems}
\label{sec:OpenProblems}

Let us conclude with mentioning some open problems for future research.

\subsection{A single-score universal Tardos scheme}
\label{sub:OpenProblems-SingleScore}

Although we argued that the universal Tardos scheme has several advantages over other binary schemes, it has a minor drawback: we have to keep multiple scores for each user, namely for each possible coalition size $c$. To address this issue, one could try making small adjustments to the universal Tardos scheme, or start from the dynamic Tardos scheme and build a different, $c_0$-independent traitor tracing scheme. For instance, would it be possible to change the process of generating the $p_i$'s such that no positions are ever disregarded? Then all scores for one user would be the same, and we would only have to keep one score for each user.

\subsection{A continuous universal Tardos scheme} 
\label{sub:OpenProblems-Continuous}

Looking at Fig.~\ref{fig:Fig5} suggests that a continuous threshold function $Z(i)$ might also be an option, with $Z$ depending on the position $i$ instead of on the coalition size $c$. However, for the proof of soundness of the universal Tardos scheme, we simply added up the error probabilities for each threshold and showed that this sum is still less than $\eps_1$. If we use a continuous function $Z(i)$ and use this same proof method, this would lead to even smaller values of $\eps^{(i)}$ and longer codelengths. Still, theoretically it would be interesting to see if such a continuous threshold function can be constructed. 

\subsection{A fully dynamic Tardos scheme}
\label{sub:OpenProblems-FullyDynamic}

Most dynamic schemes find their strength in being able to adjust the next codeword symbols to the previous pirate output. In the dynamic Tardos scheme, we do not use this ability at all, and only use the dynamic setting to disconnect users inbetween. It is an open problem whether better results can be obtained with a fully dynamic Tardos scheme, that does use this extra power given to the distributor. 

\subsection{A weakly dynamic deterministic scheme}
\label{sub:OpenProblems-WeaklyDynamic}

The deterministic dynamic schemes in \cite{berkman01, fiat01, roelse11, tassa05} are not designed for the weakly dynamic setting, and it is not obvious how to adapt these schemes to this setting. The design and analysis of efficient weakly dynamic deterministic schemes is therefore an open problem.

\subsection{The dynamic traitor tracing capacity}
\label{sub:OpenProblems-Capacity}

On the other hand, it is also very well possible that no fully dynamic Tardos scheme exists that achieves significantly better codelengths. For the static setting, it is known that the order codelength of the Tardos scheme (quadratic in $c_0$, logarithmic in $n$) is optimal. But what about the dynamic setting? What is the optimal order codelength required to catch all colluders? Our results show that the optimal order codelength is at most quadratic in $c$, but this may not be optimal.

\subsection{A $q$-ary dynamic Tardos scheme} 
\label{sub:OpenProblems-qary}

In this paper we discussed several probabilistic dynamic schemes, taking the static binary Tardos scheme and the results of Laarhoven and De Weger~\cite{laarhoven11} as starting points. The design and analysis of $q$-ary probabilistic dynamic traitor tracing schemes is still an open problem. A possible approach for solving this problem is to take the $q$-ary Tardos scheme of \v{S}kori\'{c} et al.~\cite{skoric08} as a starting point. 


In a recent paper, Laarhoven et al.~\cite{laarhoven12} presented another approach to solve this problem. It was shown that with a divide-and-conquer construction, any binary dynamic traitor tracing scheme can be turned into a $q$-ary dynamic traitor tracing scheme with a codelength that is roughly a factor $q/2$ smaller than the codelength of the underlying binary scheme. Applying this to the constructions described in this paper, this leads to $q$-ary dynamic Tardos schemes with codelengths of the order $\ell_q = O\left(\frac{c^2}{q} \ln \frac{n}{\eps_1}\right)$. Moreover, for fixed $q$ and large $c$, this leads to an asymptotic codelength of $\ell_q \to \frac{\pi^2}{q} c^2 \ln \frac{n}{\eps_1}$, compared to the $\ell_2 \to \frac{\pi^2}{2} c^2 \ln \frac{n}{\eps_1}$ of the binary schemes presented in this paper. For details, see \cite{laarhoven12}. 


\section*{Acknowledgments} 
The authors are grateful to the anonymous reviewers for their valuable comments.

\end{document}